\documentclass[12pt]{article}
\usepackage[paperwidth=210mm, paperheight=11in]{geometry}
\usepackage{amsfonts,amsmath,amsthm,amssymb,amscd}
\usepackage{newpxtext,newpxmath} 
\usepackage{natbib}
\usepackage{tikz} 
\usetikzlibrary{arrows.meta}
\usepackage[nobottomtitles*]{titlesec}
\titleformat{\section}{\Large\bfseries\raggedright}{\thesection}{1em}{}
\titleformat{\subsection}{\large\bfseries\raggedright}{\thesubsection}{1em}{}
\titlespacing*{\section}{0pt}{*3}{.9ex}[\fill]
 
\usepackage{needspace} 
\usepackage{microtype} 
        \usepackage{graphicx}
\usepackage[colorlinks,linkcolor=blue,citecolor=blue]{hyperref} 

\usepackage{xcolor}

\newcommand{\reals}{\mathbb{R}} 
 
\newcommand{\T}{^{\top}} 
\newcommand{\0}{{\mathbf0}}
\newcommand{\1}{{\mathbf1}}

\def\endproof{\hfill\strut\nobreak\hfill\tombstone\par\smallbreak}
\def\tombstone{{\large$\square$}}

\newdimen\einr\einr1.8em
\newdimen\rmeinr\rmeinr1.8em
\newdimen\tmp \newcommand{\abs}[1]{\par\hangafter=1\hangindent=\einr
  \noindent\hbox to\einr{#1\hfill}\ignorespaces} 
 
\newcommand\bullitem{\tmp\einr\einr\rmeinr\abs{\raise.17ex\hbox{\kern7pt\scriptsize$\bullet$}}\einr\tmp}

\parindent\einr
\newtheorem{theorem}{Theorem}
\newtheorem{lemma}{Lemma}
\newtheorem{corollary}{Corollary}
\newtheorem{proposition}[theorem]{Proposition}
\newtheorem{conjecture}[theorem]{Conjecture}
\theoremstyle{definition}

\title{Equilibrium Numbers in Non-Square Bimatrix Games}
\author{Constantin Ickstadt\footnote{FB 12 -- Institut f\"ur Mathematik, Goethe-Universit\"at,
  Robert-Mayer-Str.~10, 60325 Frankfurt am Main, Germany.
  Email: ickstadt@math.uni-frankfurt.de,
  theobald@math.uni-frankfurt.de} \and Thorsten Theobald$^*$
  \and Bernhard von Stengel\footnote{Department of
  Mathematics, London School of Economics, London WC2A 2AE,
  United Kingdom. Email: b.von-stengel@lse.ac.uk}}

\date{September 21, 2026}
\begin{document}
\maketitle
\begin{abstract}
Bimatrix games may have an exponential number of mixed Nash
equilibria if both dimensions of the game are allowed to
grow.
Bounds on their maximal number give structural insights that
have been used to construct hard-to-solve games.
We show new sharp or asymptotically sharp bounds on the
(polynomial) number of equilibria for generic games where
one dimension of the game is fixed and the number of
strategies of the other player grows.
These results go beyond the hitherto studied square games.
Our methods employ combinatorial properties of polytopes,
and recent obstructions that relate to the graph of those
polytopes.
For $n\ge5$, we construct $3\times n$ games that have all
$2n+1$ vertices of the best-response polytope as equilibrium
strategies, proved using a simple case of the 4-color
theorem for planar graphs.
Generic $4\times 5$ games are shown to have at most 17
equilibria, using computer calculations with existing
datasets for all combinatorial types of the relevant
polytopes.
For $d\times n$ games, we construct games where all but a
fraction of $O(1/n)$ of the maximum number of vertices are
equilibrium strategies.
\end{abstract}

\newpage
\setcounter{tocdepth}{3} \tableofcontents
\newpage

\section{Introduction}

Bimatrix games are finite two-player games in strategic
form, a basic model of game theory.
Its central solution concept is Nash equilibrium, a pair of
mutually optimal strategies that always exists if players
are allowed to randomize \citep{Nash1951}.

We present new results on the possible number of Nash
equilibria in a generic bimatrix game.
Games with non-generic payoffs can have infinite sets of
Nash equilibria, so we only consider generic games.
The main novelty is to go beyond the square games hitherto
studied in this context.

It is well known that the number of mixed Nash equilibria in
a bimatrix game can be exponential if both dimensions of the
game are allowed to grow.
The $n\times n$ coordination game where both players have the
identity matrix as a payoff matrix has $2^n-1$ mixed
equilibria.
\cite{QS-1994,quint-shubik-1997}
conjectured that this is the maximum number of equilibria
for a generic $n\times n$ game.
This conjecture is true for $n\le4$
\citep{keiding-1997, MP-1999},
was disproved for $n\ge 6$ \citep{vS1999},
and recently answered positively for $n=5$ \citep{ITvS25}.

This paper studies equilibrium numbers beyond the
Quint-Shubik conjecture (now settled in full) by
considering $d\times n$ games that may not be square, that
is, for $d\ne n$.
We keep the number $d$ of rows fixed and construct games
with large numbers of equilibria as the number $n$ of
columns grows.
We choose $d$ (rather than~$m$) for the number of rows of
the game, which is the usual notation for a geometric
dimension that is kept fixed while $n$ grows.

\subsection{Our results and methods}
\label{s-results}

We show tight bounds on equilibrium numbers for $3\times n$
and $4\times 5$ games, and asymptotically tight bounds
(in~$n$) for fixed~$d$ for $d\times n$ games.

Our methods extend the use of \textit{polytopes} for
constructing games with many equilibria. 
Equilibria of bimatrix games correspond to certain 
complementary pairs of vertices of best-response
polytopes $P$ and $Q$ derived from the payoff matrices
(see (\ref{PQ}) in the preliminary Section~\ref{s-prelim}).
For a $d\times n$ game, the polytope $P$ has dimension
$d$ (also called a $d$-polytope), and the polytope $Q$
has dimension~$n$. 
Both polytopes have the same number $d+n$ of inequalities.
Each inequality corresponds to a pure strategy of a player. 
When the inequality is tight (holds as an equality), it
defines a facet of the polytope, \textit{labeled} with its
pure strategy.
An equilibrium is a completely labeled vertex pair. 

For the nondegenerate games that we consider, every polytope
vertex belongs to at most one equilibrium.
The number of vertices is therefore an upper bound on the
number of equilibria of the game.
Usually, for $d\le n$, the polytope $P$ of smaller
dimension has a smaller number of vertices, which therefore
provides a tighter bound on the number of equilibria.

In Section~\ref{s-23} we prove the new result that $3\times n$
games for $n=3$ and $n\ge 5$ have the maximal number $2n+1$
of Nash equilibria that use \textit{all} vertices of the
polytope~$P$.
(However, $3\times 4$ games have at most 7 Nash equilibria.)
A necessary and sufficient condition is that all facets of
$P$ are even-sided polygons.
It relates to the 3-colorability of
triangulated even-degree planar graphs, a simple case of the
4-color theorem.

For this result, two methods come into play:
Adjacent equilibrium vertices belong to equilibria of
opposite \textit{index} \citep{Shapley1974}, so that an
odd-sided polygon as face necessarily misses at least one
equilibrium vertex.
Secondly, it suffices to label the polytope $P$ only with
$d$, here three, rather than $d+n$ many labels, with a
simple construction of the high-dimensional $n$-polytope $Q$.
The corresponding concept of \textit{unit-vector games}
\citep{SvS2016} allows the construction of games from a
single labeled polytope rather than a pair of polytopes.

In Section~\ref{s-4}, we consider $4\times 4$ and
$4\times 5$ games.
As one new result for square games, we characterize the
combinatorial types of polytopes that give rise to $4\times
4$ games with 15 Nash equilibria.
We show that $4\times 5$ games have the surprisingly low
number of at most 17 Nash equilibria.
Both results use exhaustive computer calculations with
existing data sets of all combinatorial types of simple
4-polytopes \citep{firsching-2017} and 5-polytopes
\citep{fmm-2013} with 9 facets.
These are preprocessed with the help of the recent
\textit{stable-set bound} \citep{ITvS25} for equilibrium
numbers of polytopes, which substantially reduces the number
of polytopes to be checked for their 9! possible label
permutations.

Section~\ref{s-cyc} considers constructions of games based
on \textit{dual cyclic polytopes}, with largely 
combinatorial considerations.
Dual cyclic polytopes have the maximum number of vertices for a
given dimension and number of facets.
Their vertex-facet incidences are described by highly
regular bit patterns that fulfill the so-called
\textit{Gale evenness} condition 
(with a natural ``cyclic'' symmetry in even dimension).
Using these polytopes for $P$ and~$Q$ with a pairwise-swap
permutation of the labels of~$Q$ (see (\ref{swap}) and
(\ref{pi}) below) creates games with a large number of
equilibria.
This construction was first stated in \cite{vS1999} to
create $n\times n$ games with about $2.414^n$ equilibria (of
the maximum number of about $2.6^n$ vertices).
In Subsections~\ref{s-4xn} and~\ref{s-6} we show that for
$d\times n$ games derived from dual cyclic polytopes, the
number of equilibria obtained are tight for $d=4$ and $d=6$
and even~$n$.
These results use a known \textit{simplex} bound that
triangles and tetrahedra as polytope faces have at most two
equilibrium vertices.

In Subsection~\ref{s-drows} we show that all but $O(1/n)$ of
all vertices are equilibria of the $d\times n$ games
obtained from these dual cyclic polytopes.
Because they have the maximum numbers of vertices, this
is asymptotically tight for all $d\times n$ games.
Subsection~\ref{s-compute} reports the computationally
verified optimality of the swap permutation for
cyclic-polytope $8\times 8$ games, and related observations.
In Subsection~\ref{s-discuss} we note the
high precision required for the game payoffs to represent
the cyclic polytope structure correctly.
This ``numerical brittleness'' makes these games unlikely to
arise in practice.

\subsection{Relevance for equilibrium computation}
\label{s-equicomp}

Our results on equilibrium numbers are important for
equilibrium-finding algorithms in several respects.
The games with a large number of equilibria that we
construct provide test cases for algorithms that find
all Nash equilibria, and show how large their output can be.

In this paper we construct games where all or nearly all
vertices of a best-response polytope are equilibrium
strategies.
This shows that for general bimatrix games, finding all Nash
equilibria cannot be done faster than the \textit{lrsNash}
algorithm \citep{ARSvS} that enumerates all vertices of $P$
and matches complementary vertices of~$Q$.

The running time of the \textit{lrsNash} algorithm is
bounded by the maximum number of vertices of a polytope
(times a polynomial factor for solving the linear equations
that define a vertex).
This makes it the best method for finding all Nash
equilibria of bimatrix games.
An alternative is \textit{support enumeration} \citep{DK1991}.
In order to find the equilibria $(x,y)$ of a nondegenerate
$d\times n$ game, support enumeration tests all $d$-sized
subsets of $[d+n]$ as possible supports of~$y$ and unplayed
strategies of~$x$.
With $d$ fixed, this requires $O(n^d)$ steps
(times the same polynomial factor for solving the linear
equations for $x$ and $y$ as for vertex enumeration). 
In contrast, the \textit{Upper Bound Theorem} for polytopes
states that asymptotically, as a function of~$n$, the number
of vertices of $P$ is $O(n^{\lfloor d/2\rfloor})$
\citep{Mul1994}.
Both running times are polynomial in $n$, but the latter is
clearly better, in particular for larger~$d$.

A second relationship to equilibrium computation is the use
of dual cyclic polytopes for constructing hard-to-solve
bimatrix games for the algorithm by \citep{LH} that finds
one Nash equilibrium.
As shown by \cite{SvS2004, SvS2006}, this algorithm may
take exponentially many steps, even when including support
enumeration \citep{SvS2016}.
Because finding one Nash equilibrium in a bimatrix game is
PPAD-hard \citep{ChenDeng2006,CDT2009,DGP2009},
these exponential running times are sometimes associated
with a computational ``infeasibility'' of finding a Nash
equilibrium.
However, we discuss in Section~\ref{s-discuss} some caveats
to this claim because of the ``numerical brittleness'' of
cyclic polytopes.

The use of dual cyclic polytopes for the construction of
bimatrix games with large numbers of equilibria in
\cite{vS1999}, extended here, and long steps of the
Lemke-Howson algorithm (to find a unique equilibrium) is not
surprising, because the algorithm traverses polytope
vertices that are ``almost'' equilibria, where the
equilibrium condition holds for all but one of the pure
strategies.

\subsection{Related work}
\label{s-related}

\cite{mangasarian1964} described equilibria of bimatrix games 
in terms of polyhedra; see also \citet{von-stengel-2002, vS21}.
\textit{Random} bimatrix games (under certain distributional
assumptions for their payoffs) have an expected exponential
number of Nash equilibria \citep{McLennanBerg2005}.

The more restrictive two-player rank-1 games were introduced by
\citet{kannan-theobald-2010}.
They generalize zero-sum games in that the sum of the
players' two payoff matrices is allowed to be a matrix of
rank one (rather than zero). 
One Nash equilibrium of a rank-1 game can be found in
polynomial time, but these games can have an exponential
number of equilibria \citep{vS2012,AGMSvS2021}.

The number of Nash equilibria in $N$-player games has been
studied by \citet{mckelvey-mclennan-1997} and recently
\citet{HertlingVujic}.
For semi-definite games, equilibrium numbers were considered
by \citet{semidef-games} and \cite{semidef-network-games},
which are relevant for quantum games
\citep{bostanci-watrous-2021}.
The Quint-Shubik bound does hold for tropical games
\citep{agm-2023}.

\section{Background}
\label{s-prelim}

We review the concept of best-response polytopes in
Subsection~\ref{s-bima}, and the Upper Bound Theorem
for polytopes in Subsection~\ref{s-ubt}.
There we also explain the construction of dual cyclic
polytopes that achieve the upper bound, and the
Gale evenness condition that characterizes their vertices
via certain bit patterns.
For more detailed concepts about polytopes see
\citet{gruenbaum-2003} and \citet{ziegler-1995}.

\subsection{Bimatrix games and best-response polytopes}
\label{s-bima}

We use the notation for bimatrix games from \cite{ITvS25}.
All matrices have real entries.
The transpose of a matrix $B$ is denoted by $B\T$.
All vectors are column vectors.
Vectors and scalars are treated as matrices.
The $i$th component of a vector~$x$ is denoted by $x_i$\,.
The all-zero vector is denoted by~$\0$ and the all-one
vector by~$\1$, their dimension depending on the context.
Inequalities between vectors such as $x\ge\0$ 
hold for all components.
We let $[k]=\{1,\ldots,k\}$ for any positive integer~$k$.

A $d \times n$ \textit{bimatrix game} is a pair of $d\times
n$ matrices $(A,B)$.
The $d$ rows and $n$ columns are the \textit{pure strategies} of
player 1 and~2, respectively.
The players simultaneously each choose a pure strategy,
with the resulting entry of $A$ as payoff to player~1 and of
$B$ to player~2.
A \textit{mixed strategy} is a probability vector over the
player's pure strategies.
Players are interested in maximizing their expected payoff,
given by $x\T Ay$ to player~1 and $x\T By$ to player~2 if
their mixed strategies are $x$ and~$y$.
A mixed strategy is a \textit{best response} to the other
player's mixed strategy if it maximizes the player's
expected payoff over all possible mixed strategies.
A \emph{Nash equilibrium} is a mixed-strategy pair of mutual
best responses.

By adding a constant to each payoff if necessary, the
polyhedra
\begin{equation}
\label{PQ}
\arraycolsep0pt
\begin{array}{rllrlrl}
  P&{}=\{\,x\in {\mathbb R}^d&{}\mid {}&x&{}\ge\hbox{\bf0},~& B^{\top} x&{}\le \hbox{\bf1}\},\\
  Q&{}=\{\,y\in {\mathbb R}^n&{}\mid {}&Ay&{}\le \hbox{\bf1},~& y&{}\ge\bf 0\}\,.
\end{array}
\end{equation}
are bounded, called \textit{best-response polytopes}, of
dimension $d$ and $n$, respectively.

Both $P$ and $Q$ in (\ref{PQ}) are defined by $d+n$
inequalities.
Each inequality in (\ref{PQ}) is given a \textit{label}
in $[d+n]$.
The first $d$ labels refer to the pure strategies of
player~1 and the last $n$ labels to the pure strategies of
player~2.
A point $x$ in $P$ or $y$ in $Q$ \textit{has label} $k$ if the
corresponding inequality is tight.
That is, if $k\in [d]$ then $x$ has label~$k$ if $x_k=0$,
and $y$ has label~$k$ if $(Ay)_k=1$.
Similarly, if $k=d+j$ for $j\in [n]$ then $x$ has label~$k$
if $(B\T x)_j=1$, and $y$ has label~$k$ if $y_j=0$.
A pair $(x,y)$ in $P\times Q$ is called an
\textit{equilibrium} if it is \textit{completely labeled},
that is, each $k$ in $[d+n]$ appears as a label of $x$ or~$y$.
This includes the \textit{artificial equilibrium} $(\0,\0)$.
Any other equilibrium $(x,y)$ defines a Nash equilibrium 
of $(A,B)$ when re-scaling $x$ and $y$ as probability
vectors, and every Nash equilibrium is obtained in this way.
The complete set of labels represents the equilibrium
condition that every pure strategy is played with
probability zero or is a best response.

The game is assumed to be \textit{nondegenerate}, that is,
no mixed strategy has more pure best responses than the size
of its support \citep{Shapley1974,vS21,vS2022}, which holds generically. 
Equivalently, no $x$ in $P$ has more than $d$ labels and no
$y$ in $Q$ has more than $n$ labels.
Then $P$ and $Q$ are \textit{simple polytopes} (no more than
dimension many inequalities are tight).
This implies that if $(x,y)$ is an equilibrium, then $x$ is
a vertex of $P$ and $y$ is vertex of~$Q$, and every label is
either a label of $x$ or of $y$ but not of both.
Furthermore, every tight inequality in (\ref{PQ}) defines a
facet of $P$ or $Q$ (assuming the inequality is not
redundant), each with its respective label.
The number of vertices of $P$ (and of $Q$) is therefore a
bound on the number of Nash equilibria of $(A,B)$.

The polytopes $P$ and $Q$ in (\ref{PQ}) have the special
form that they sit in the nonnegative orthant. 
That is, both have $\0$ as a vertex and adjacent
nonnegativity constraints $x\ge\0$ as $d$ inequalities for
$P$ with labels $1,\ldots,d$, and $y\ge\0$ as $n$ inequalities
for $Q$ with labels $d+1,\ldots,d+n$.
It is possible to transform \textit{any} pair of simple
polytopes with facet labels in $[d+n]$ and a completely
labeled vertex pair $(x',y')$ into the shape (\ref{PQ}) via
a suitable affine transformation and a permutation of the
labels to map $(x',y')$ to the artificial equilibrium
$(\0,\0)$ \citep[Prop.~2.1]{vS1999}.
Bimatrix games can therefore be constructed from suitably
labeled simple polytopes of dimensions $d$ and $n$ with
$d+n$ facets.
In addition, all that matters for completely labeled vertex
pairs is the \textit{combinatorial structure} of these
polytopes, that is, which vertices lie on which facets.

\subsection{Cyclic polytopes and the Upper Bound Theorem for polytopes}
\label{s-ubt}

For a given dimension and number of facets,
simple polytopes that are \textit{dual-neighborly} have the
largest number of vertices.
The best-known instances of these polytopes are the dual
\textit{cyclic polytopes}, obtained as follows.
Consider $N$ points on the \textit{moment curve} in
dimension~$d$, given by all points
$\mu(t)=(t,t^2,\ldots,t^d)$ for $t\in\reals$.
The $N$ points are $\mu(t_i)$ for $t_1<t_2<\cdots<t_N$.
The cyclic polytope $C_d(N)$ is the convex hull of these
points, and its face structure does not depend on the
specific choices of the parameters~$t_i$.
Its dual (or polar) polytope $C_d(N)^\Delta$ is obtained 
by considering some interior point $\overline\mu$ of
$C_d(N)$ and letting
\begin{equation}
\label{cyc}
C_d(N)^\Delta = \{z \in\reals^d\mid
(\mu(t_i)-\overline \mu)\T z\le 1 
\hbox{ for }i\in[N]\}\,. 
\end{equation}
The cyclic polytope $C_d(N)$ is simplicial, that is, all its
facets are simplices, given by convex hulls of suitable 
$d$-element sets of its vertices $\mu(t_i)$.
It is \textit{neighborly}, which means that every vertex set
of size $\lfloor d/2\rfloor$ defines a face of the polytope.

These properties translate to the dual polytope
$C_d(N)^\Delta$ as follows.
Each vertex $\mu(t_i)$ of $C_d(N)$ corresponds to a facet of
$C_d(N)^\Delta$, via the respective tight inequality
in~(\ref{cyc}).
The polytope $C_d(N)^\Delta$ is simple because $C_d(N)$ is
simplicial.
It is dual-neighborly, that is, any $\lfloor d/2\rfloor$ of
its facets have a nonempty intersection.
Every vertex of $C_d(N)^\Delta$ corresponds to a facet
of~$C_d(N)$.
The neighborliness of~$C_d(N)$ (having the largest possible
number of low-dimensional faces) implies that $C_d(N)$ has
the largest possible number of facets; this is the essence
of the Upper Bound Theorem for polytopes, see
Theorem~\ref{UBT}.

The combinatorial structure of $C_d(N)$ is completely known
and the same for any parameters $t_1<\cdots<t_N$ when
written down in that order.
Encode a $d$-element set $S$ of vertices via a
\textit{bitstring} $b=b_1b_2\cdots b_N$ where $b_i=1$ if 
$\mu(t_i)\in S$ and $b_i=0$ otherwise
(so $b$ has $d$ 1's and $N-d$ 0's).
There is a unique hyperplane through the points in $S$,
which does not contain any further points of the moment
curve (otherwise the hyperplane would define a polynomial of
degree~$d$ with more than $d$ roots).
This hyperplane defines a facet of $C_d(N)$ if and only if 
all the other vertices $\mu(t_j)$ where $b_j=0$ are on the
same side of the hyperplane.
This holds if and only if $b$ fulfills the central
\textit{Gale evenness} condition \citep{gale1963},
which states that all substrings of
$b$ of the form $01^t0$ have an even number of 1's,
that is, $t$ is even ($1^t$ is a string of $t$ 1's);
the reason is that the moment curve always crosses the
hyperplane, so that an odd number of 1's would create
vertices that correspond to the two 0's on opposite sides of
the hyperplane, which therefore cannot be a facet.

The number $g(d,n)$ of bitstrings $b$ of length $d+n$ with
$d$ 1's and $n$ 0's that fulfill Gale evenness is determined
as follows.
Suppose $d$ is even, $d=2\ell$.
Then $b$ is composed either of $\ell$ strings 11 (pairs of 1's) and $n$ 0's,
with $\binom{\ell+n}{\ell}$ many possibilities,
or (with an odd number of 1's at both ends),
a 1 at each end of~$b$ with 
$\ell-1$ strings 11 and $n$ 0's in the middle,
which are $\binom{\ell-1+n}{\ell-1}$ many possibilities.
The total number is therefore
\begin{equation}
\label{GE}
g(2\ell,n)=
\binom{\ell+n}{\ell} + \binom{\ell-1+n}{\ell-1}
= \binom{\ell+n}{\ell}\left(1+\frac\ell{\ell+n}\right). 
\end{equation}
For even $d$, Gale evenness is preserved under cyclic
shifts of the bitstrings.

If $d$ is odd, $d=2\ell+1$, then the bitstring $b$ contains
an odd number of 1's and therefore has a single 1 at one of
its two ends.
The remaining string is composed of $\ell$ substrings 11 and
$n$ 0's, with $\binom{\ell+n}{\ell}$ many possibilities.
Hence,
\begin{equation}
\label{GEodd}
g(2\ell+1,n)=2 \binom{\ell+n}{\ell}.
\end{equation}

\begin{theorem}
[Upper Bound Theorem for polytopes, \citealp{McMullen1970}]
\label{UBT}
A simplicial $d$-polytope with $N$ vertices has
at most $g(d,N-d)$ many facets, and achieves this bound if
and only if it is neighborly.
A simple $d$-polytope with $N$ facets has at
most $g(d,N-d)$ many vertices, and achieves this bound if
and only if it is dual-neigh\-borly.
\end{theorem}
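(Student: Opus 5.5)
The plan is the classical route through $h$-vectors, using the Dehn--Sommerville equations and counting by a generic linear functional. By polar duality, the statement for simple $d$-polytopes with $N$ facets is the same as the statement for simplicial $d$-polytopes with $N$ vertices. Facets of the simplicial polytope correspond to vertices of the simple one, and faces of the simplicial polytope correspond to nonempty intersections of facets of the simple one. In particular, neighborliness corresponds to dual-neighborliness. So I prove the statement for a simple $d$-polytope $P$ with $N$ facets, writing $n=N-d$.

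First I set up the $h$-vector. Choose a generic linear objective $c$ and orient every edge of $P$ so that $c$ increases along it. Let $h_i$ be the number of vertices with exactly $i$ outgoing (upward) edges; then the number of vertices is $\sum_{i=0}^d h_i$. Each nonempty face $F$ has a unique $c$-lowest vertex $v$. Conversely, $v$ together with any subset of its upward edges spans a unique face. This gives $f_k=\sum_i\binom{i}{k}h_i$, where $f_k$ is the number of $k$-faces. Hence the $h_i$ do not depend on $c$, and replacing $c$ by $-c$ yields the Dehn--Sommerville relations $h_i=h_{d-i}$.

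Next I derive the key inequality $h_i\le\binom{n+i-1}{i}$ for $i\le d/2$. I restrict $c$ to a facet $G$, which is a simple $(d-1)$-polytope. Each vertex of $G$ with $i$ upward edges in $G$ has either $i$ or $i+1$ upward edges in $P$. Double counting with a suitable choice of $c$ for each facet gives the standard relation $\sum_G h_i(G)=(i+1)h_{i+1}+(d-i)h_i$. Combined with $h_i(G)\le h_i(P)$, this yields $(i+1)h_{i+1}\le (n+i)h_i$. Starting from $h_0=1$, induction gives $h_i\le\binom{n+i-1}{i}$. I expect the main obstacle to be the comparison $h_i(G)\le h_i(P)$. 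I would prove it by choosing $c$ so that $G$ lies at the top, i.e.\ $c$ is maximized on $G$, and then arguing that vertices of $G$ with $i$ upward edges inside $G$ have $i$ upward edges in $P$. This makes $h_i(G)$ count a subset of the vertices counted by $h_i(P)$.

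Finally I sum and treat the equality case. Symmetry gives
\[ \sum_{i=0}^d h_i\le\sum_{i\le d/2}\binom{n+i-1}{i}+\sum_{i<d/2}\binom{n+i-1}{i}, \]
with the middle term counted once when $d$ is even. Evaluating with $\sum_{i=0}^{k}\binom{n+i-1}{i}=\binom{n+k}{k}$ gives exactly the formulas (\ref{GE}) and (\ref{GEodd}). For instance, for $d=2\ell$ the sum is $\binom{n+\ell}{\ell}+\binom{n+\ell-1}{\ell-1}$, which equals $g(2\ell,n)$. For the equality case, the bound is attained iff $h_i=\binom{n+i-1}{i}$ for all $i\le\lfloor d/2\rfloor$. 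By the relation $f_k=\sum_i\binom{i}{k}h_i$ read from the top, this holds iff every $\lfloor d/2\rfloor$ facets meet. Equivalently, the dual has the full number $\binom{N}{k}$ of $(k-1)$-faces for $k\le\lfloor d/2\rfloor$, which is dual-neighborliness. Dualizing then gives the simplicial statement.
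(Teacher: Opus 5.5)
The paper does not prove this theorem; it quotes it as background from \citet{McMullen1970}, so there is no in-paper argument to compare with. Your outline is the standard proof of that result, in its dual form for simple polytopes with a generic linear functional (as in \citet{ziegler-1995}), and it is correct.

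Two phrasings should be tightened. First, a functional that is literally maximized on all of $G$ is constant on $G$, hence not generic. You should take a small generic perturbation of the outer normal of $G$. This puts every vertex of $G$ strictly above every other vertex, so the one edge at $v\in G$ that leaves $G$ points downward. Second, the identity $\sum_G h_i(G)=(i+1)h_{i+1}+(d-i)h_i$ needs one fixed $c$, used on $P$ and on all facets simultaneously. A vertex with $j$ upward edges has up-degree $j-1$ in the $j$ facets that omit an upward edge, and up-degree $j$ in the other $d-j$ facets. Only the comparison $h_i(G)\le h_i(P)$ uses a facet-specific functional. That is legitimate because you have shown the $h$-numbers do not depend on the functional.

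For the equality case, the precise form of ``read from the top'' is $f_{d-\ell}(P)=\sum_{m=0}^{\ell}\binom{d-m}{\ell-m}h_m$, obtained from $f_k=\sum_i\binom ik h_i$ and $h_i=h_{d-i}$. You also need the identity $\sum_{m=0}^{\ell}\binom{d-m}{\ell-m}\binom{n+m-1}{m}=\binom{N}{\ell}$. In a simple polytope $f_{d-\ell}\le\binom N\ell$, with equality iff every $\ell$ facets meet. All coefficients are positive and each $h_m$ is bounded by $\binom{n+m-1}{m}$, so $f_{d-\ell}=\binom N\ell$ holds iff $h_m=\binom{n+m-1}{m}$ for all $m\le\ell$. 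This is exactly the equivalence you claim.
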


One polytope vertex in (\ref{PQ}) is $\0$, part of the
artificial equilibrium $(\0,\0)$, which does not represent a
mixed strategy.
Every other vertex may potentially represent an equilibrium
strategy, and is unique for that equilibrium.
The resulting upper bound on equilibrium numbers is known
\citep{keiding-1997}.

\begin{corollary}
A nondegenerate $d\times n$ bimatrix game
has at most $g(d,n)-1$ many Nash equilibria.
\end{corollary}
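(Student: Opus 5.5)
The plan is to bound the number of equilibria by the number of vertices of the $d$-dimensional best-response polytope~$P$ in (\ref{PQ}), and then apply Theorem~\ref{UBT}. Since the game $(A,B)$ is nondegenerate, $P$ is a simple $d$-polytope. It is defined by the $d+n$ inequalities $x\ge\0$ and $B\T x\le\1$. Some of these inequalities may be redundant, so $P$ has $N\le d+n$ facets. I will take $P$ to be full-dimensional and bounded, as arranged in Subsection~\ref{s-bima} by adding a constant to the payoffs.

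First, I associate each Nash equilibrium with a vertex of~$P$. Every Nash equilibrium of $(A,B)$ comes from a completely labeled pair $(x,y)\ne(\0,\0)$. Under nondegeneracy, $x$ is a vertex of~$P$. Two different equilibria give different vertices~$x$, because a vertex $x$ has exactly $d$ labels, and the missing $n$ labels determine $y$ as the unique vertex of $Q$ with exactly those labels.

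Next, I check that $x\ne\0$ for every genuine Nash equilibrium. If $x=\0$, then $x$ has labels $1,\ldots,d$, so $y$ must have all labels $d+1,\ldots,d+n$. That forces $y=\0$, which is the artificial equilibrium. Hence the number of Nash equilibria is at most the number of vertices of $P$ minus one.

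It remains to bound the vertex count. By Theorem~\ref{UBT}, a simple $d$-polytope with $N$ facets has at most $g(d,N-d)$ vertices. I then need $g(d,N-d)\le g(d,n)$ when $N\le d+n$, that is, that $g(d,\cdot)$ is nondecreasing. This follows from (\ref{GE}) and (\ref{GEodd}), since binomial coefficients $\binom{\ell+n}{\ell}$ are increasing in~$n$. Alternatively, I can avoid the monotonicity check by treating redundant inequalities as if they were facets in a perturbed simple polytope with $d+n$ facets. The monotonicity route is simpler.

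The main obstacle is this bookkeeping step: making sure redundant inequalities, which give fewer facets, cannot break the bound. The monotonicity of $g$ handles it. Everything else is a direct consequence of nondegeneracy and Theorem~\ref{UBT}.
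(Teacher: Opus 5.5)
Your proposal is correct and follows the paper's own argument. Both map each Nash equilibrium injectively to a vertex of $P$ other than $\0$, which belongs to the artificial equilibrium, and then bound the vertex count with Theorem~\ref{UBT}. Your treatment of possibly redundant inequalities, via monotonicity of $g(d,\cdot)$ in (\ref{GE}) and (\ref{GEodd}), is a small detail that the paper leaves implicit by taking $N=d+n$ facets directly.
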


We will use later the combinatorial description via Gale
evenness strings of the vertices of the dual cyclic polytope
$C_d(N)^\Delta$.

\section{Games with two or three rows}
\label{s-23}

In this and the following sections, we study bounds on the
number of equilibria in nondegenerate bimatrix games
$(A,B)$ with best-response polytopes $P$ and $Q$ in
(\ref{PQ}).
If $(x,y)\in P\times Q$ and $(x,y)$ is an equilibrium,
then we call $x$ an \textit{equilibrium vertex} of~$P$
and $y$ its (unique) \textit{partner}, and vice versa.

The following constraints apply in general.

\begin{proposition}
\label{p-index}
In a nondegenerate bimatrix game $(A,B)$:

(a) The number of equilibria, including the artificial
equilibrium $(\0,\0)$, is even.

(b) Each equilibrium has \textit{index} $+1$ or $-1$.
Half of the equilibria have index $+1$ and
half (including the artificial equilibrium) have index $-1$.

(c) Any two equilibrium vertices of $P$ that are connected by
an edge belong to equilibria of opposite index.
\end{proposition}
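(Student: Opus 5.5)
The plan is to use the Lemke--Howson path structure due to \cite{Shapley1974}, applied to the product polytope $P\times Q$. Fix a label $k\in[d+n]$ and consider the set of vertex pairs $(x,y)\in P\times Q$ that are \emph{$k$-almost completely labeled}: every label except possibly $k$ appears on $x$ or $y$. By nondegeneracy, $P$ and $Q$ are simple. Hence every such pair that is not an equilibrium has exactly one duplicate label, and it has exactly two neighbors in this set, obtained by dropping the duplicate label in $P$ or in $Q$. An equilibrium (completely labeled pair) has exactly one neighbor, obtained by dropping label $k$. So the graph decomposes into paths and cycles whose path endpoints are exactly the equilibria. This pairs up all equilibria, including $(\0,\0)$, which proves (a).

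For (b), I would define the index of an equilibrium $(x,y)$ as an orientation sign. Order the $d$ facets of $P$ containing $x$ according to their labels. Take the determinant of their normal vectors, in $d\times d$ form, times the corresponding determinant for the $n$ facets of $Q$ at $y$. Multiply by the sign of the permutation that sorts the concatenated label list into $1,\ldots,d+n$. Equivalently, one can use Shapley's formula via the sign of $\det$ of the payoff submatrices on the supports. The key lemma is that along a Lemke--Howson path, the orientation of almost completely labeled pairs is preserved when it is defined relative to the missing label $k$. A pivot in $P$ replaces one tight facet by another at an adjacent vertex. The two facet normals lie on opposite sides of the common edge, so the determinant changes sign. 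At the same time the label occupying that slot changes from the duplicate label to a new label, which turns the permutation sign around in a compensating way. This is the standard Shapley/Eaves--Lemke argument, and I expect it to be the main obstacle: one has to set up the sign conventions so that the two endpoints of any path get opposite signs. Once that holds, equilibria are matched in pairs of opposite index, giving (b). Direct computation at $(\0,\0)$, where the facets are coordinate hyperplanes, gives index $-1$ once the convention is normalized to make that true.

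For (c), let $x$ and $x'$ be adjacent equilibrium vertices of $P$ with partners $y$ and $y'$. The edge $xx'$ lies on $d-1$ common facets. $x$ has one further label $k$ and $x'$ has one further label $k'\neq k$. If $k$ and $k'$ were equal the vertices would coincide, so $k\ne k'$. Since $(x,y)$ is completely labeled and $x'$ lacks $k$, $y$ must have $k$; similarly $y'$ has $k'$. Moreover, $y$ has exactly the labels not on $x$, and $y'$ has exactly the labels not on $x'$. So $y$ has label $k'$ and lacks $k$, and $y'$ has label $k$ and lacks $k'$. Now consider the pair $(x',y)$. Its labels cover everything, with $k'$ appearing twice. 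Dropping label $k$ from $x$ moves along the edge to $x'$, so $(x,y)$ and $(x',y)$ are adjacent on the $k$-almost completely labeled path. In $(x',y)$ the duplicate label is $k'$. Leaving it in $Q$ would take us elsewhere; leaving it in $P$ takes us back to $x$. So the path starting at $(x,y)$ by dropping $k$ reaches $(x',y)$ in one step.

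The next step from $(x',y)$ drops the duplicate $k'$ in $Q$, from $y$. I expect this does not necessarily end at $y'$, so instead I would argue orientation directly. The sign change across the edge $xx'$ in $P$ is computed exactly as in one pivot step of (b). The $Q$-parts, however, differ between $y$ and $y'$, so a cleaner route is preferable. I would use the fact, proved in (b), that the index of $(x,y)$ depends only on the labeled facet structure at $x$ together with the complementary label set. The complementary set determines the face of $Q$, and the sign of $y$ in $Q$ is a function of that label set up to a global sign of $Q$. Swapping $k$ for $k'$ then flips the $P$-determinant, by the adjacent-vertex argument, and keeps the permutation-sign bookkeeping consistent in both factors. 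Together these give opposite indices, which proves (c).
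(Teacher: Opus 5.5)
Your treatment of (a) and (b) follows the paper's route. Lemke--Howson paths pair up all equilibria, and Shapley's theorem that the two ends of such a path have opposite index gives (b). The paper also cites this theorem rather than proving it, so your heuristic sketch of the pivot invariant is at the same level.

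The gap is in (c), at exactly the point where you give up on the path argument. Your doubt that the second pivot reaches $y'$ is unfounded, and removing that doubt \emph{is} the proof. Let $L$ be the $d-1$ labels of the edge $xx'$ and put $M=[d+n]\setminus(L\cup\{k,k'\})$, so that $|M|=n-1$. As you correctly derive, $y$ has labels $M\cup\{k'\}$ and $y'$ has labels $M\cup\{k\}$. (The sentence just before that has a slip: it is $(x',y')$ being completely labeled that forces $y'$ to have $k$, and $y$, not $y'$, has $k'$.) Because $Q$ is simple, the points of $Q$ carrying all labels in $M$ form an edge of $Q$. That edge has exactly two vertices, and the distinct vertices $y$ and $y'$ both lie on it, so it is the edge $yy'$. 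Hence dropping the duplicate label $k'$ from $y$ at $(x',y)$ leads to $y'$, which picks up the missing label $k$. The path for missing label $k$ is therefore $(x,y)\to(x',y)\to(x',y')$. The two equilibria are the two ends of one path, so (c) follows from (b); this is the ``easy'' implication the paper refers to.

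The direct orientation argument you substitute does not close the gap. Saying the sign of $y$ in $Q$ is ``a function of that label set up to a global sign of $Q$'' says nothing about how the $Q$-factor changes from the facet set $M\cup\{k'\}$ at $y$ to $M\cup\{k\}$ at $y'$. Without the adjacency of $y$ and $y'$, nothing controls that change.

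A correct direct computation with the $(d+n)\times(d+n)$ determinant involves three sign changes:
\begin{itemize}
\item The $P$-block flips, because the normals of facets $k$ and $k'$ have inner products of opposite sign with $x'-x$, while the $d-1$ common normals are orthogonal to it.
\item The $Q$-block flips, for the same reason along the edge $yy'$.
\item The rows in positions $k$ and $k'$ exchange blocks, which costs one transposition.
\end{itemize}
The product of these is $-1$. Your sketch establishes only the first and folds the other two into ``consistent bookkeeping''. The second is precisely the adjacency fact you doubted.
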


Condition (a) is the \textit{parity argument} for equilibria 
(which leads to the complexity class PPAD \cite{Papa1994}),
because equilibria are endpoints of Lemke-Howson paths
\citep{LH}.

In (b), the index of an equilibrium $(x,y)\in P\times Q$ is
the sign of the determinant of the rows of the tight
inequalities in (\ref{PQ}) when written down in the order of
their labels, multiplied by $-1$ if $d+n$ is even to
normalize the index of the artificial equilibrium to $-1$.
\cite{Shapley1974} showed that the endpoints of
Lemke-Howson paths have opposite index
(for a compact proof see \citealp{vS21}). This easily implies~(c) \citep[Lemma~5]{ITvS25}.

The \textit{graph} of a polytope is defined by its vertices and
edges (seen as unordered vertex pairs).
Proposition~\ref{p-index}(c) implies that for equilibria of
the same index, their equilibrium vertices are non-adjacent
in the graph of~$P$ (and analogously in the graph of $Q$).
This creates powerful \textit{obstructions} for equilibrium
numbers via the graphs of the polytopes (a stable set, also
called independent set, is a set of vertices no two of which
share an edge):

\begin{theorem}
[Stable-Set Bound, \citealp{ITvS25}]
\label{t-ssb}
Let $G$ be the graph of $P$ and $S_1$ and $S_2$ be two
disjoint stable sets of~$G$ of equal size $|S_1|$, and let this
size be maximal among all such pairs.
Then the game has at most $2|S_1|$ equilibria.
\end{theorem}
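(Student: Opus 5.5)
Let $E$ be the set of equilibria of the game, including the artificial equilibrium $(\0,\0)$. By Proposition~\ref{p-index}(b), every equilibrium has index $+1$ or $-1$, and exactly half of the equilibria have each index. So $E=E_+\cup E_-$ is a disjoint union with $|E_+|=|E_-|=|E|/2$. Let $V_+$ and $V_-$ be the sets of vertices of $P$ that are the $P$-components of the equilibria in $E_+$ and $E_-$, respectively.

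Because the game is nondegenerate, each vertex $x$ of $P$ belongs to at most one equilibrium: if $(x,y)$ is an equilibrium, then $y$ is the unique vertex of $Q$ carrying exactly the labels that $x$ lacks. The map from $E$ to vertices of $P$ is therefore injective, so $|V_+|=|V_-|=|E|/2$ and $V_+\cap V_-=\emptyset$. The artificial equilibrium contributes the vertex $\0$ to $V_-$, and this is harmless, since the theorem counts all equilibria of the form in~(\ref{PQ}), including $(\0,\0)$.

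Next we show that $V_+$ and $V_-$ are stable sets of the graph $G$ of $P$. Take two distinct vertices in $V_+$ and suppose they are joined by an edge of $P$. By Proposition~\ref{p-index}(c), they belong to equilibria of opposite index, which contradicts the fact that both lie in $V_+$. So $V_+$ is stable, and the same argument shows $V_-$ is stable.

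Hence $(V_+,V_-)$ is a pair of disjoint stable sets of $G$ of equal size $|E|/2$. By the maximality of $|S_1|$ among all such pairs, $|E|/2\le|S_1|$, and so the game has at most $2|S_1|$ equilibria, counting the artificial one. The count of Nash equilibria proper is therefore at most $2|S_1|-1$, which is a fortiori at most $2|S_1|$.

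None of the steps is hard once Proposition~\ref{p-index}(c) is available; the only care needed is to check that equal cardinality of the two index classes survives the passage to vertex sets, which is exactly where nondegeneracy, through injectivity, is used.
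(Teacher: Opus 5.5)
Your proof is correct and follows the same route the paper indicates. The paper cites the result from \citet{ITvS25}, and notes just before the statement that, by Proposition~\ref{p-index}(c), equilibrium vertices of equal index are pairwise non-adjacent. Your argument makes this precise: the two index classes of equilibrium vertices are disjoint stable sets of equal size by Proposition~\ref{p-index}(b), and nondegeneracy makes the map from equilibria to vertices of $P$ injective. Your reading that $2|S_1|$ counts the artificial equilibrium also matches how the paper uses the bound, for example for the 16 equilibria of $4\times 4$ games.
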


\begin{lemma}
[Odd-Polygon Bound]
\label{l-odd}
If $P$ or $Q$ has an odd-sided polygon as a two-dimensional
face, then at least one vertex of that face is not an equilibrium
vertex.
\end{lemma}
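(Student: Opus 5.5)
The plan is a parity argument on the cycle formed by the face, using Proposition~\ref{p-index}(c). Suppose $F$ is a two-dimensional face of $P$ (the case of $Q$ is symmetric) that is a polygon with an odd number $k$ of vertices $v_1,\ldots,v_k$. Suppose, for contradiction, that every $v_i$ is an equilibrium vertex. Since the game is nondegenerate, each $v_i$ belongs to exactly one equilibrium, so it carries a well-defined index $\epsilon_i\in\{+1,-1\}$.

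The key observation is that the edges of the polygon $F$ are edges of $P$: a one-dimensional face of a face of $P$ is a face of $P$. Hence $v_i$ and $v_{i+1}$ (indices mod $k$) are adjacent in the graph of $P$. Proposition~\ref{p-index}(c) then gives $\epsilon_{i+1}=-\epsilon_i$ for all $i$. Going around the whole cycle yields $\epsilon_1=(-1)^k\epsilon_1=-\epsilon_1$, because $k$ is odd. This contradicts $\epsilon_1\ne0$.

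Equivalently, the indices would give a proper 2-coloring of an odd cycle, which is impossible. So at least one vertex of $F$ is not an equilibrium vertex.

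One detail needs checking: the artificial equilibrium. The vertex $\0$ of $P$ may lie on $F$, but it is an equilibrium vertex of index $-1$ by Proposition~\ref{p-index}(b). The argument therefore covers it without change, and the statement counts it as an equilibrium vertex as well.

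The only step that requires care is confirming that Proposition~\ref{p-index}(c) applies to every pair of consecutive vertices, including pairs involving $\0$. This holds because (c) is stated for arbitrary equilibrium vertices connected by an edge, with the artificial equilibrium included in the index bookkeeping of (b). I expect no real obstacle beyond this.
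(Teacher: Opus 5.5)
Your proposal is correct and follows the paper's reasoning. The paper derives the lemma directly from Proposition~\ref{p-index}(c): adjacent equilibrium vertices have opposite index, so the boundary cycle of an odd-sided 2-face cannot consist entirely of equilibrium vertices, since its vertices would then be properly 2-colored by index.
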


\begin{lemma}
[Simplex Bound]
\label{l-simplex}
If $P$ or $Q$ has a simplex as a face (a clique in the graph
of the simple polytope), then at most two vertices of that
simplex are equilibrium vertices.
\end{lemma}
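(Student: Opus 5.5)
The plan is to combine Proposition~\ref{p-index}(c) with the fact that the graph of a simplex face is complete. Let $F$ be a face of $P$ (the case of $Q$ is symmetric) that is a simplex. Its vertices are vertices of $P$, and its edges are edges of $P$, because a face of a face is a face. Every two vertices of a simplex are joined by an edge of the simplex, so any two vertices of $F$ are adjacent in the graph of~$P$. This is exactly the clique property noted in the statement.

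Suppose, for contradiction, that three distinct vertices $u,v,w$ of $F$ are equilibrium vertices. Since the game is nondegenerate, each of them belongs to a unique equilibrium, and these three equilibria are distinct because their $P$-components differ. By Proposition~\ref{p-index}(b), each equilibrium has index $+1$ or $-1$. By Proposition~\ref{p-index}(c), any two of $u,v,w$, being adjacent in the graph of~$P$, belong to equilibria of opposite index.

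So the map sending each of $u,v,w$ to the index of its equilibrium would be a proper 2-coloring of the triangle on $u,v,w$. This is impossible, since a triangle is an odd cycle: by pigeonhole, two of the three vertices receive the same index, yet they are adjacent. Hence at most two vertices of $F$ are equilibrium vertices.

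There is no real obstacle here. The only points to check are that the edges of a face are edges of $P$, and that the artificial equilibrium vertex $\0$ may lie on $F$ without causing trouble: it is also an equilibrium vertex, of index $-1$ by Proposition~\ref{p-index}(b), so the same argument covers it. For $Q$, the same argument applies using the analogue of Proposition~\ref{p-index}(c) for $Q$, which follows by symmetry of the roles of the players.
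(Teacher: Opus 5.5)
Your proof is correct and follows the route the paper indicates. A simplex face is a clique in the graph, and Proposition~\ref{p-index}(c) forces adjacent equilibrium vertices to have opposite index, so three pairwise adjacent equilibrium vertices would give an impossible proper 2-coloring of a triangle. The paper also notes that an index-free proof exists (used by Keiding for triangles), but your argument is the one it relies on.
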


Lemma~\ref{l-simplex} can be proved without using the
equilibrium index, and has been used for triangles by
\citet{keiding-1997}.

We now consider $d\times n$ games for small $d$.
If $d=2$ then $P$ is a polygon with (at most) $n+2$
vertices, with an even number of equilibria.
This implies the following proposition, which is known
and where it is easy to construct the respective games
\citep{bgt-1989,QS-1994}.

\begin{proposition}
The maximum number of Nash equilibria of a nondegenerate
$2\times n$ game is $n+1$ if $n$ is even and $n$ if $n$ is odd.
\end{proposition}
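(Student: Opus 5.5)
The plan is to prove the upper bound first and then give a matching construction.

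\emph{Upper bound.} For a nondegenerate $2\times n$ game, $P$ is a simple 2-polytope with at most $2+n$ facets, so it is a polygon with at most $n+2$ vertices; the redundant inequalities only reduce this. One of these vertices is $\0$, which belongs to the artificial equilibrium. Each equilibrium uses a distinct vertex of $P$, so including the artificial one there are at most $n+2$ equilibria. By Proposition~\ref{p-index}(a) this total is even. If $n$ is even, $n+2$ is even and we get at most $n+1$ Nash equilibria. If $n$ is odd, $n+2$ is odd, so the total is at most $n+1$ and there are at most $n$ Nash equilibria.

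\emph{Construction.} It suffices to treat even $n$: for odd $n$, take the even-$n$ game with $n-1$ columns and add a strictly dominated column for player~2 (small payoffs in $B$, generic entries). This adds a redundant inequality to $P$ and leaves the equilibria unchanged, giving $n-1+1=n$ equilibria.

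For even $n$, I would use the labeled-polygon viewpoint of Section~\ref{s-prelim}. Take $P$ to be an $(n+2)$-gon with facets labeled so that the two facets adjacent at $\0$ carry labels $1,2$, and the remaining $n$ facets carry the column labels $3,\dots,n+2$ in cyclic order. We need an $n$-polytope $Q$ such that every vertex $x\neq\0$ of $P$ has a complementary partner. The simplest route is a unit-vector-type game: let $A$ have in column $j$ a unit vector $e_{\ell(j)}$ with $\ell(j)\in\{1,2\}$. Then $Q$ is a product-like polytope, and a vertex $x$ of $P$ with labels $L(x)$ is an equilibrium vertex exactly when the labels $\{1,2\}\setminus L(x)$ are covered appropriately by the columns missing from $L(x)$. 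Concretely, $x$ is an equilibrium iff each row label not held by $x$ equals $\ell(j)$ for some column label $j$ of $x$.

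Along the polygon, the interior vertices (those other than $\0$ and its two neighbors) each lie on two consecutive column facets $j,j+1$, so we need $\{\ell(j),\ell(j+1)\}=\{1,2\}$. This forces $\ell$ to alternate $1,2,1,2,\dots$ around the chain. The two neighbors of $\0$ lie on facet $1$ (resp.\ $2$) and the first (resp.\ last) column facet. They require that the first column have $\ell=2$ and the last have $\ell=1$. Alternation over $n$ columns makes the first and last columns differ in parity exactly when $n$ is even, which is where parity enters. Equivalently, the cycle of $P$ is 2-colorable by row labels consistently with $\0$, mirroring the even-polygon condition of Lemma~\ref{l-odd}.

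\emph{Main obstacle.} The delicate step is checking the unit-vector equilibrium criterion precisely, i.e.\ which $y$ is the partner and that it lies in $Q$, and choosing the labeling so that it is consistent at $\0$'s neighbors. If that bookkeeping becomes messy, I would instead write down explicit payoffs, e.g.\ for $n=2$ the $2\times 2$ coordination game with 3 equilibria, and extend by induction, adding two columns that cut a corner of $P$ to create two new equilibrium vertices.
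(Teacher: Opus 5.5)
Your proof is correct. The upper bound is exactly the paper's argument: $P$ is a polygon with at most $n+2$ vertices, one of which is $\0$, and Proposition~\ref{p-index}(a) supplies the parity.

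For the lower bound the paper gives no construction. It only remarks that the games are easy to build and cites \citet{bgt-1989} and \citet{QS-1994}. Your explicit construction is the two-dimensional analogue of the paper's $3\times n$ method. You take an $(n+2)$-gon and make it a unit-vector game whose column facets are labeled alternately $2,1,2,\dots,1$ from the facet-$1$ side to the facet-$2$ side. The $2$-coloring of an even cycle replaces the $3$-coloring of Proposition~\ref{3colors}, and the parity of $n$ enters exactly where you say, at the two neighbors of~$\0$.

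The step you flag as the main obstacle is already settled by Proposition~\ref{uvg}, which supplies the partner $y$ of every completely labeled vertex. Nondegeneracy is also immediate. $P$ is a polygon in which every inequality defines a facet. $Q=\{y\ge\0,\ \sum_{j:\ell(j)=1}y_j\le1,\ \sum_{j:\ell(j)=2}y_j\le1\}$ is a product of two simplices, hence simple. So the fallback induction is unnecessary.

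Your reduction for odd $n$ is sound. A strictly dominated column adds an inequality that is never tight on $P$, so every equilibrium has $y_j=0$ in that column and the equilibria are unchanged. This is the same device the paper uses to get seven equilibria in $3\times4$ games.
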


Games with three rows, $d=3$, are more interesting.
Then $P$ is a 3-polytope (three-dimensional) and trivially
dual-neighborly because $\lfloor d/2\rfloor=1$
(any single facet has nonempty intersection).
By Theorem~\ref{UBT} and (\ref{GEodd}) where $\ell=1$, every
simple 3-polytope $P$ with $3+n$ facets has
at most $2n+2$ vertices.
This bound holds with equality by the Euler formula
\begin{equation}
\label{euler}
V-E+F=2
\end{equation}
for a 3-polytope $P$ with $V$ vertices, $E$
edges, and $F$ facets: Because $P$ is simple, every vertex
belongs to three edges and every edge has two vertices as
endpoints, which implies $E=\frac32 V$.
With $F=3+n$, (\ref{euler}) then gives $V=2n+2$.

As observed by \citet[Lemma 4.9]{QS-1994}, this gives an
upper bound of $2n+1$ Nash equilibria for a nondegenerate
$3\times n$ game.
We will construct $3\times n$ games that indeed have
$2n+2$ equilibria (including the artificial equilibrium)
for $n=3$ and all $n\ge 5$.

If $n=3$, then there are two polytopes with six facets, the
cube and the dual cyclic polytope $C_3(6)^\Delta$, shown
here with their Schlegel diagrams (projections to a facet):
\[
\begin{tikzpicture}[scale=.5,
   >={Stealth[inset=1pt,length=7pt,angle'=30]}]
\draw[] (0,0) -- (4,0) -- (4,4) -- (0,4) -- cycle;
\draw[] (0,0) -- (1,1);
\draw[] (4,0) -- (3,1);
\draw[] (4,4) -- (3,3);
\draw[] (0,4) -- (1,3);
\draw[] (1,1) -- (3,1) -- (3,3) -- (1,3) -- cycle;
\begin{scope}[shift = {(7,0)}]
\draw[] (0,0) -- (5,0) -- (5,4) -- (0,4) -- cycle;
\draw[] (0,0) -- (1,1) -- (3,2) -- (4,2) -- (5,0);
\draw[] (0,4) -- (1,3) -- (3,2);
\draw[] (1,3) -- (1,1);
\draw[] (4,2) -- (5,4);
\end{scope}
\end{tikzpicture}
\]
The polytope $C_3(6)^\Delta$ has pentagons and triangles as
faces, so not all of its eight vertices can be equilibrium
vertices by Lemma~\ref{l-odd}.
The cube arises from the coordination game where both
players have the $3\times 3$ identity matrix as payoffs.
This game has seven Nash equilibria, so all vertices of the
cube are equilibria (including the artificial equilibrium).

A $3\times 4$ game can also have seven equilibria (by
adding a dominated strategy as column), but no more.

\begin{proposition}
\label{p-34}
A nondegenerate $3\times 4$ game has at most seven Nash
equilibria.
\end{proposition}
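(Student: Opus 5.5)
The plan is to combine the vertex count for simple 3-polytopes with the parity and odd-polygon obstructions. For a nondegenerate $3\times 4$ game, $P$ is a simple 3-polytope with $7$ facets, so by the Euler-formula argument above it has exactly $V=2n+2=10$ vertices and $E=15$ edges. Every equilibrium uses a distinct vertex of $P$, so there are at most $10$ equilibria including the artificial one. By Proposition~\ref{p-index}(a) this number is even, so it suffices to rule out exactly $10$, i.e.\ that \emph{all} vertices of $P$ are equilibrium vertices. By Lemma~\ref{l-odd}, it is enough to show that $P$ has an odd-sided polygon as a facet.

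Suppose instead that every facet of $P$ is even-sided, hence has at least $4$ sides. Summing the facet sizes counts every edge twice, so the sum is $2E=30$. This is a sum of seven even numbers each $\ge4$. The minimum possible sum is $28$, so exactly one facet is a hexagon $H$ and the other six are quadrilaterals. I will then show that this face pattern is impossible for a 3-polytope.

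To rule it out, I use that two facets of a 3-polytope meet in at most one edge (or a vertex, or not at all). Each of the six edges of $H$ lies in a second facet, and these six facets are distinct. So all six quadrilaterals $Q_1,\dots,Q_6$ are arranged in a ring around $H$, with $Q_i$ containing the hexagon edge $v_iv_{i+1}$ (indices mod $6$). Since $P$ is simple, each $v_i$ has exactly one edge $v_iw_i$ leaving $H$. This edge lies in $Q_{i-1}$ and $Q_i$, so $Q_i$ has vertices $v_i,v_{i+1},w_{i+1},w_i$ and contains the edge $w_iw_{i+1}$. All seven facets are now accounted for. Every edge $w_iw_{i+1}$ must lie in two facets, but the only candidates are quadrilaterals whose unique edge avoiding $H$ is of the form $w_jw_{j+1}$.

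Only $10-6=4$ vertices lie off $H$, so the six vertices $w_i$ cannot all be distinct. The main obstacle is to turn this coincidence into a contradiction cleanly. I would argue as follows. Each $w$-vertex has degree $3$, so it is joined to at most three hexagon vertices. It cannot be joined to two consecutive ones $v_i,v_{i+1}$, since then $Q_i$ would be a triangle. The $w$-vertices must also absorb all six outgoing edges $v_iw_i$ together with the ring edges $w_iw_{i+1}$. A degree count on the four outer vertices then shows that the ring edges cannot all be present, or else some facet would have fewer than four vertices. This contradicts that the $Q_i$ are quadrilaterals.

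If this case analysis becomes messy, I would fall back on the classification of simple 3-polytopes with $7$ facets. There are only five combinatorial types, and I would check directly that each one has a triangle or a pentagon as a facet. Either route gives an odd-sided facet, so by Lemma~\ref{l-odd} some vertex of $P$ is not an equilibrium vertex. Hence there are at most $9$ equilibria including the artificial one, by parity at most $8$, and so at most seven Nash equilibria.
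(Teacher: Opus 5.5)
Your proposal follows the paper's proof: ten vertices plus parity reduce the claim to finding an odd-sided facet, the count $2E=30$ forces one hexagon and six quadrilaterals, and the paper then either asserts that no such 3-polytope exists or checks the five simple 3-polytopes with seven facets, which is exactly your fallback. Your sketched degree count does not yet give a contradiction, because the global count is consistent (three outer edges, each the outer edge of two quadrilaterals), but it can be closed locally: for an outer vertex $x$ with $h(x)$ neighbours on $H$ and $o(x)$ outer neighbours, count the indices $j$ with $x\in\{w_j,w_{j+1}\}$ in two ways. This number is $2h(x)$ since $w_j\ne w_{j+1}$, and also $2\,o(x)$ since each outer edge lies in exactly two of the $Q_j$ as their outer edge, so $h(x)=o(x)$, which contradicts $h(x)+o(x)=3$.
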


\proof
Assume that the polytope $P$ has seven facets.
Then $P$ has ten vertices.
In order for all of them to be equilibrium vertices,
all facets of $P$ have to be even-sided polygons
by Lemma~\ref{l-odd}. 
The only way to satisfy the Euler formula (\ref{euler}) 
is if these are six quadrilaterals and one hexagon
(with $(6\times 4+6)/3$ vertices).
However, there is no such 3-polytope.

Alternatively, the following pictures of the five
simple 3-polytopes with seven facets from \cite{GS-1967}
\[
\begin{tikzpicture}
  \coordinate (a) at (0,1);
  \coordinate (b) at (0.951,0.309);
  \coordinate (c) at (0.588,-0.809);
  \coordinate (d) at (-0.588,-0.809);
  \coordinate (e) at (-0.951,0.309);
  \coordinate (f) at (0*0.6,1*0.6);
  \coordinate (g) at (0.951*0.6,0.309*0.6);
  \coordinate (h) at (0.588*0.6,-0.809*0.6);
  \coordinate (m) at (-0.588*0.6,-0.809*0.6);
 \coordinate (n) at (-0.951*0.6,0.309*0.6);
  
  \draw (a)--(b); \draw (b)--(c); \draw (c)--(d);
  \draw (d)--(e);\draw (e)--(a);
  \draw (a)--(f); \draw (b)--(g); \draw (c)--(h);
  \draw (d)--(m); \draw (e)--(n);
  \draw (f)--(g);\draw (g)--(h); \draw (h)--(m);
  \draw (m)--(n); \draw (n)--(f);
\end{tikzpicture}\hspace*{2ex}
\begin{tikzpicture}
  \coordinate (a) at (0,1.1);
  \coordinate (b) at (0.951,0.209);
  \coordinate (c) at (0.588,-0.809);
  \coordinate (d) at (-0.588,-0.809);
  \coordinate (e) at (-0.951,0.209);
  \coordinate (f) at (0*0.6,1*0.6);
  \coordinate (g) at (0.851*0.6,0.309*0.6);
  \coordinate (h) at (0*0.6,-0.209*0.6);
  \coordinate (m) at (0*0.6,-0.809*0.6);
 \coordinate (n) at (-0.851*0.6,0.309*0.6);
  
  \draw (a)--(b); \draw (b)--(c); \draw (c)--(d);
  \draw (d)--(e);\draw (e)--(a);
  \draw (a)--(f); \draw (b)--(g);
  \draw (d)--(m); \draw (e)--(n);
  \draw (f)--(g);\draw (g)--(h); \draw (c)--(m); \draw (h)--(m);
  \draw (n)--(f); \draw (h)--(n);
\end{tikzpicture}\hspace*{2ex}
\begin{tikzpicture}
  \coordinate (a) at (1,0);
  \coordinate (b) at (0.5,0.866);
  \coordinate (c) at (-0.5,0.866);
  \coordinate (d) at (-1,0);
  \coordinate (e) at (-0.5,-0.866);
  \coordinate (f) at (0.5,-0.866);
  \coordinate (g) at (1*0.5,0*0.5);
  \coordinate (h) at (-1*0.5,0*0.5);
  \coordinate (m) at (-0.5*0.5,-0.866*0.5);
  \coordinate (n) at (0.5*0.5,-0.866*0.5);
   
  \draw (a)--(b); \draw (b)--(c); \draw (c)--(d);
  \draw (d)--(e);\draw (e)--(f); \draw (f)--(a);
  \draw (h)--(m); \draw (n)--(m); \draw (n)--(g);
  \draw (g)--(b); \draw (h)--(c);
  \draw (g)--(a);
  \draw (h)--(d);
  \draw (m)--(e);  
  \draw (n)--(f);
\end{tikzpicture}\hspace*{2ex}
\begin{tikzpicture}
  \coordinate (a) at (1,0);
  \coordinate (b) at (0.5,0.866);
  \coordinate (c) at (-0.5,0.866);
  \coordinate (d) at (-1,0);
  \coordinate (e) at (-0.5,-0.866);
  \coordinate (f) at (0.5,-0.866);
  \coordinate (g) at (1*0.3,0.4*0.5);
  \coordinate (h) at (-1*0.3,0.4*0.5);
  \coordinate (m) at (0,0);
  \coordinate (n) at (0,-0.866*0.5);
   
  \draw (a)--(b); \draw (b)--(c); \draw (c)--(d);
  \draw (d)--(e);\draw (e)--(f); \draw (f)--(a);
  \draw (h)--(m); \draw (n)--(m); \draw (n)--(e);
  \draw (g)--(b); \draw (h)--(c);
  \draw (g)--(a);
  \draw (h)--(d);
  \draw (m)--(g);  
  \draw (n)--(f);
\end{tikzpicture}\hspace*{2ex}
\begin{tikzpicture}
  \coordinate (a) at (1,0);
  \coordinate (b) at (0.5,0.866);
  \coordinate (c) at (-0.5,0.866);
  \coordinate (d) at (-1,0);
  \coordinate (e) at (-0.5,-0.866);
  \coordinate (f) at (0.5,-0.866);
  \coordinate (g) at (-0.5,-0.3);
  \coordinate (h) at (-0.2,0);
  \coordinate (m) at (0.2,0);
  \coordinate (n) at (0.5,0.3);
   
  \draw (a)--(b); \draw (b)--(c); \draw (c)--(d);
  \draw (d)--(e);\draw (e)--(f); \draw (f)--(a);
  \draw (a)--(n); \draw (b)--(n);
  \draw (c)--(h);
  \draw (d)--(g); \draw(e)--(g);
  \draw (f)--(m);
  \draw (g)--(h)--(m)--(n);
\end{tikzpicture}
\]
show that all of them have at least one triangle or
pentagon as a facet and miss one equilibrium vertex. 
Because the number of equilibria is even, $P$ has at most
eight equilibria.
\endproof

By Lemma~\ref{l-odd} a necessary condition for all $2n+2$
vertices of $P$ being equilibrium vertices is that all
facets of $P$ are even-sided polygons.
If $n$ is odd, this is easily achieved by considering an
$(n+1)$-gon $T$ (a convex polygon with $n+1$ vertices) and
defining $P$ as the \textit{prism over}~$T$ given by
$P=T\times [0,1]$ (for example, the cube if $n=3$ and $T$ is
the unit square).
For such a prism, it is possible to immediately create the
payoff matrix $B$ in (\ref{PQ}) by drawing $T$ in the
nonnegative quadrant of $\reals^2$ with corner $(0,0)$.

For even $n\ge6$, a 3-polytope with $3+n$ even-sided
polygons as facets can be obtained as follows.
Consider a prism over an $(n-2)$-gon~$T$.
This prism has $n$ facets: the top and bottom facets
$T\times\{1\}$ and
$T\times\{0\}$, and
$n-2$ rectangular sides.
Consider some vertex $v$ on the top facet with corresponding
vertex~$u$ on the bottom facet, and truncate the prism by
cutting off~$v$ with three additional inequalities that
create three new quadrilateral facets as in this picture:
\[
\begin{tikzpicture}[scale=.4]
\coordinate (A) at (0,0);
  \coordinate (B) at (3,0);
  \coordinate (C) at (3,-6);
  \coordinate (D) at (0,-6);
  \coordinate (E) at (-3,-6);
  \coordinate (F) at (-3,0);
\draw[] (A) node[above]{$v$};
\draw[] (D) node[below]{$u$};
\draw[] (A) -- (B) -- (C) -- (E) -- (F) -- (A) -- (D); 
\draw[] (5,-3) node{$\to$};

\begin{scope}[shift={(10,0)}]
\coordinate (A) at (0,0);
  \coordinate (B) at (3,0);
  \coordinate (C) at (3,-6);
  \coordinate (D) at (0,-6);
  \coordinate (E) at (-3,-6);
  \coordinate (F) at (-3,0);
  \coordinate (G) at (-2,0);
  \coordinate (H) at (2,0);
  \coordinate (I) at (0,-1);
  \coordinate (J) at (-1,-2);
  \coordinate (K) at (1,-2);
  \coordinate (L) at (0,-5);
\draw[] (D) node[below]{$u$};
\draw[] (A) -- (B) -- (C) -- (E)
-- (F) -- (A) -- (I) -- (K) -- (L) -- (D); 
\draw[] (G) -- (J) -- (I);
\draw[] (H) -- (K);
\draw[] (J) -- (L);
\end{scope}
\end{tikzpicture}
\]
The top facet then gets two additional vertices, and the two
rectangular sides with common vertex~$u$ become hexagons.
The general shape is shown here, with the originally
rectangular sides numbered $1,\ldots,n-2$, the three new
quadrilaterals numbered $n-1,n,n+1$, the
unchanged bottom facet numbered $n+2$ 
and the new top facet numbered $n+3$:
\begin{equation}
\label{top}
\lower20ex\hbox{
\begin{tikzpicture}[yscale=-1,
 ,>=latex, inner sep=0pt, outer sep=2pt, axis/.style={thin,gray}, vecteur/.style=
{->,thick,color=black,smooth}, courbe/.style={thick,color=#1,smooth}]
    \colorlet{darkblue}{blue!50!black};
    \foreach \t in {0,...,2} {
      \draw ({1.0*cos(\t *20 + 315)},{1.0*sin(\t *20 + 315)}) -- ({3*cos(\t *20 + 315)},{3*sin(\t *20 + 315)});
      \draw ({1.0*cos(- \t *20 + 225)},{1.0*sin(- \t *20 + 225)}) -- ({3*cos(- \t *20 + 225)},{3*sin(- \t *20 + 225)});
     }

    \draw ({1.0*cos(0 *20 + 315)},{1.0*sin(0 *20 + 315)}) -- ({1.0*cos(0 *20 + 270)},{1.0*sin(0 *20 + 270)}); 
   \draw ({1.0*cos(0 *20 + 225)},{1.0*sin(0 *20 + 225)}) -- ({1.0*cos(0 *20 + 270)},{1.0*sin(0 *20 + 270)});
    \draw ({1.0*cos(0 *20 + 270)},{1.0*sin(0 *20 + 270)}) -- ({1.3*cos(0 *20 + 270)},{1.3*sin(0 *20 + 270)});
    \draw ({1.3*cos(0 *20 + 270)},{1.3*sin(0 *20 + 270)}) -- ({2.1*cos(0 *20 + 284)},{2.1*sin(0 *20 + 284)});
    \draw ({1.3*cos(0 *20 + 270)},{1.3*sin(0 *20 + 270)}) -- ({2.1*cos(0 *20 + 254)},{2.1*sin(0 *20 + 254)});
    \draw ({2.6*cos(0 *20 + 270)},{2.6*sin(0 *20 + 270)}) -- ({2.1*cos(0 *20 + 284)},{2.1*sin(0 *20 + 284)});
    \draw ({2.6*cos(0 *20 + 270)},{2.6*sin(0 *20 + 270)}) -- ({2.1*cos(0 *20 + 254)},{2.1*sin(0 *20 + 254)});
    \draw ({2.6*cos(0 *20 + 270)},{2.6*sin(0 *20 + 270)}) -- ({3.0*cos(0 *20 + 270)},{3.0*sin(0 *20 + 270)}); 
    \draw ({3.0*cos(0 *20 + 225)},{3.0*sin(0 *20 + 225)}) -- ({3.0*cos(0 *20 + 239)},{3.0*sin(0 *20 + 239)}) 
        -- ({3.0*cos(0 *20 + 270)},{3.0*sin(0 *20 + 270)}); 
    \draw ({3.0*cos(0 *20 + 315)},{3.0*sin(0 *20 + 315)}) -- ({3.0*cos(0 *20 + 301)},{3.0*sin(0 *20 + 301)})
       -- ({3.0*cos(0 *20 + 270)},{3.0*sin(0 *20 + 270)});
    \draw ({2.1*cos(0 *20 + 254)},{2.1*sin(0 *20 + 254)}) -- ({3.0*cos(0 *20 + 239)},{3.0*sin(0 *20 + 239)});
    \draw ({2.1*cos(0 *20 + 284)},{2.1*sin(0 *20 + 284)}) -- ({3.0*cos(0 *20 + 301)},{3.0*sin(0 *20 + 301)});
    \foreach \t in {0,...,1} {
       \draw ({3*cos((\t+1) *20 + 315)},{3*sin((\t+1) *20 + 315)}) -- ({3*cos(\t *20 + 315)},{3*sin(\t *20 + 315)});
       \draw ({3*cos((-\t-1) *20 + 225)},{3*sin((-\t-1) *20 + 225)}) -- ({3*cos(-\t *20 + 225)},{3*sin(-\t *20 + 225)});
       \draw ({1.0*cos((\t+1) *20 + 315)},{1.0*sin((\t+1) *20 + 315)}) -- ({1.0*cos(\t *20 + 315)},{1.0*sin(\t *20 + 315)});
       \draw ({1.0*cos((-\t-1) *20 + 225)},{1.0*sin((-\t-1) *20 + 225)}) -- ({1.0*cos(-\t *20 + 225)},{1.0*sin(-\t *20 + 225)});
    }
    \foreach \t in {-23,...,23} {
      \draw ({1.0*cos(\t *4 +2 +90-1)},{1.0*sin(\t *4 +2 +90-1)}) --({1.0*cos(\t *4 +90-1)},{1.0*sin(\t *4 +90-1)});  
       \draw ({3.0*cos(\t *4 +2 +90-1)},{3.0*sin(\t *4 +2 +90-1)}) --({3.0*cos(\t *4 +90-1)},{3.0*sin(\t *4 +90-1)});  
    }  
  \node[] at (0,0) {$n{+}2$};
  \node[] at (0,-.8) {$u$};
  \node[] at ({2.0*cos(0 *20 + 172)},{2.0*sin(0 *20 + 172)}){$\vdots$};
  \node[] at ({2.1*cos(0 *20 + 195)},{2.1*sin(0 *20 + 195)}){$n{-}4$};
  \node[] at ({2.1*cos(0 *20 + 214)},{2.1*sin(0 *20 + 214)}){$n{-}3$};
  \node[] at ({2.0*cos(0 *20 + 237)},{2.1*sin(0 *20 + 237)}){$n{-}2$};
  \node[] at ({2.6*cos(0 *20 + 254)},{2.6*sin(0 *20 + 254)}){$n{-}1$};
  \node[] at ({2.6*cos(0 *20 + 284)},{2.6*sin(0 *20 + 284)}){$n$};
  \node[] at ({2.0*cos(0 *20 + 270)},{2.0*sin(0 *20 + 270)}){$n{+}1$};
  \node[] at ({2.1*cos(0 *20 + 295)},{2.0*sin(0 *20 + 305)}){$1$};
  \node[] at ({2.1*cos(0 *20 + 325)},{2.0*sin(0 *20 + 325)}){$2$};
  \node[] at ({2.1*cos(0 *20 + 345)},{2.0*sin(0 *20 + 345)}){$3$};
  \node[] at ({2.0*cos(0 *20 + 8)},{2.0*sin(0 *20 + 8)}){$\vdots$};
  \node[] at ({3.25*cos(0 *20 + 270)},{3.25*sin(0 *20 + 270)}){$n{+}3$};
\end{tikzpicture}}\end{equation}
For $n=6$ with $T$ as the unit square and $v=(1,1,1)$,
such a truncation of the unit cube is shown in (\ref{123})
below.
The respective inequalities give a matrix $B$ (after
multiplication by~9) and a game $(A,B)$ with
\begin{equation}
\arraycolsep .3em
\label{AB}
A=
\left(\,
\begin{matrix}
1 & 0 & 0 & 1 & 0 & 0\\
0 & 1 & 0 & 0 & 1 & 0\\
0 & 0 & 1 & 0 & 0 & 1\\
\end{matrix}
\,\right)\,,
\qquad
B=\left(\,
\begin{matrix}
9 & 0 & 0 & 2 & 4 & 4\\
0 & 9 & 0 & 4 & 2 & 4\\
0 & 0 & 9 & 4 & 4 & 2\\
\end{matrix}
\,\right)\,.
\end{equation}
Then $P$ has indeed 14 vertices and $(A,B)$ has 13 Nash
equilibria!

Remarkably, the condition that $P$ has only even-sided
polygons as facets is also \textit{sufficient} for all
vertices to be equilibrium vertices.
How could a suitable payoff matrix $A$ in (\ref{AB}), which
defines a high-dimensional $n$-polytope $Q$, be defined so
simply?
The answer is the concept of \textit{unit vector games},
first defined by \cite{balthasar2009} and studied in
\cite{SvS2016}.
They are about $d$-polytopes with facet labels \textit{only}
in~$[d]$, whose completely labeled vertices correspond to
the equilibria of a suitable game, called a unit vector
game.
The following proposition \citep[Prop.~1]{SvS2016} is
straightforward to show.
The unit vector $e_k$ has 1 in its $k$th component and 0
elsewhere.

\begin{proposition}
[\citealp{SvS2016}]
\label{uvg}
Let $P'$ be a simple $d$-polytope,
$P'=\{x\in\reals^d\mid x\ge\0,~B\T x\le\1\}$
for a $d\times n$ matrix $B$,
where every inequality has a label in $[d]$
and each inequality $x_i\ge0$ has label $i$ for $i\in[d]$.
Then the completely labeled vertices $x$ of $P'$ (those that
have all labels in $[d]$) correspond to the equilibria
$(x,y)$ of the game $(A,B)$ where the $j$th column of $A$ is
the unit vector $e_{l(j)}$ if the inequality $(B\T x)_j\le1$
has label $l(j)$, for $j\in[n]$.
\end{proposition}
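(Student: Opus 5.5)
The plan is to identify the labels of the best-response polytope $Q$ of the game $(A,B)$ and check complementarity directly. With $A$ defined so that column $j$ is $e_{l(j)}$, the polytope $P$ in (\ref{PQ}) is exactly $P'$. I will label its facets in the usual way: label $i\in[d]$ for $x_i\ge0$ and label $d+j$ for $(B\T x)_j\le1$. For $y\ge\0$, the $i$th constraint of $Q$ reads $(Ay)_i=\sum_{j:\,l(j)=i}y_j\le1$. So $Q$ is a product of $d$ simplices $Q=\prod_{i\in[d]}\Delta_i$, where $\Delta_i=\{y_{J_i}\ge\0,\ \1\T y_{J_i}\le1\}$ and $J_i=\{j\mid l(j)=i\}$. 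If some $J_i$ is empty, the corresponding inequality of $Q$ is redundant. I will handle this by noting that the label $i$ is then never present at $y$, which matches the fact that $P'$ has no facet labelled $i$ other than $x_i\ge0$.

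Step 1 describes the vertices of $Q$ and their labels. A vertex picks, for each $i$, either $y_{J_i}=\0$ or $y_{J_i}=e_j$ for one $j\in J_i$. The labels of such a vertex are as follows. In the first case it has all labels $d+j'$ for $j'\in J_i$, but not label $i$. In the second case it has label $i$ and the labels $d+j'$ for $j'\in J_i\setminus\{j\}$.

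Step 2 proves the correspondence. Let $x$ be a vertex of $P'$ that is completely labeled in the $[d]$-labelling. Nondegeneracy (simplicity) gives it exactly $d$ tight inequalities with distinct labels in $[d]$. For each $i$, either $x_i=0$ is tight, or exactly one $j\in J_i$ has $(B\T x)_j=1$. Define $y$ by taking $y_{J_i}=\0$ in the first case and $y_{J_i}=e_j$ in the second. Now check that $(x,y)$ is completely labeled in $[d+n]$.
\begin{itemize}
\item Label $i$: $x$ has it in the first case, and $y$ has it in the second.
\item Label $d+j'$ with $j'\in J_i$: $y$ has it unless $j'$ is the chosen $j$, and then $x$ has it.
\end{itemize}
Conversely, let $(x,y)$ be an equilibrium. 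For each $i$, the labels of $x$ and $y$ must together cover $\{i\}\cup(d+J_i)$. By Step 1, $y$ misses either $i$ alone or a single $d+j$ with $j\in J_i$, so $x$ must have that label. Hence $x$ carries the $[d]$-label $i$ for every $i$, so $x$ is completely labeled in $P'$. The vertex $y$ is determined uniquely by $x$, so the map is a bijection. The artificial equilibrium $(\0,\0)$ corresponds to $x=\0$.

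The main obstacle is bookkeeping rather than depth. The points needing care are that $Q$ is bounded only when every $J_i$ is nonempty, and that $y$ must be a genuine vertex of $Q$. If some label $i$ is unused, I will argue directly with the inequalities instead of the product structure, or add a dominated column to make $Q$ bounded without affecting the equilibria.
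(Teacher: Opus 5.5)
Your proof is correct. The paper does not prove this proposition; it cites \citet[Prop.~1]{SvS2016} and calls it straightforward to show. Your direct label check, using the fact that $Q$ is a product of simplices $\Delta_i$ over the blocks $J_i=\{j\mid l(j)=i\}$, is exactly that straightforward verification.

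Two small corrections.

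First, $Q$ is always bounded. Every coordinate $y_j$ occurs in the single block constraint $\sum_{j'\in J_{l(j)}}y_{j'}\le1$, so $Q\subseteq[0,1]^n$. An empty $J_i$ only makes row $i$ of $A$ zero and the $i$th inequality of $Q$ vacuous. Your Step~2 already handles that case, so no dominated column is needed.

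Second, in the converse you apply Step~1, which describes the labels of vertices of $Q$ only. So you should either note that $(A,B)$ is nondegenerate, which holds because $P'$ is simple and $Q$ is a product of simplices, so equilibrium points are vertices. Alternatively, argue for arbitrary $y\in Q$:
\begin{itemize}
\item $y$ has at most $|J_i|$ of the $|J_i|+1$ labels in the block $\{i\}\cup(d+J_i)$, since all $y_j=0$ for $j\in J_i$ forces $(Ay)_i=0\ne1$.
\item These $d$ blocks partition $[d+n]$.
\item $x$ has at most $d$ labels, so it has exactly one label in each block.
\item That label is label $i$ in the $[d]$-labelling of $P'$, so $x$ is a completely labeled vertex of $P'$.
\end{itemize}
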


The following picture shows the polytope $P'$ (projected to
its top face) for the matrix $B$ in (\ref{AB}) with 
facet labels 1, 2, or~3 as circled numbers. 
The facets and their labels define the
unit-vector columns of $A$ except for the facets indicent
to the origin~$\0$ that define the constraints $x\ge\0$ for~$P'$. 
The facets with labels 1 and 2 at~$u$ and the outer facet~3
correspond to the first three columns of~$A$, and the
remaining facets to the last three columns.
\begin{equation}
\label{123}
\lower11ex\hbox{\begin{tikzpicture}[scale=.47]
\coordinate (A) at (0,10);
  \coordinate (B) at (0,0);
  \coordinate (C) at (10,0);
  \coordinate (D) at (10,5);
  \coordinate (E) at (9,9);
  \coordinate (F) at (5,10);
  \coordinate (G) at (6.5,8);
  \coordinate (H) at (6,6);
  \coordinate (I) at (5,5);
  \coordinate (J) at (2,5);
  \coordinate (K) at (2,2);
  \coordinate (L) at (5,2);
  \coordinate (M) at (8,6.5);
  \coordinate (N) at (8,8);
\draw[] (F) -- (A) -- (B) -- (C) -- (D) -- (E)
-- (F) -- (G) -- (H) -- (I) -- (J) -- (K) -- (L) -- (I); 
\draw[] (H) -- (M) -- (N) -- (G);
\draw[] (A) -- (J);
\draw[] (B) -- (K);
\draw[] (C) -- (L);
\draw[] (D) -- (M);
\draw[] (E) -- (N);
\node[draw,circle,inner sep=.12em] at (3.5,3.5) {3};
\node[draw,circle,inner sep=.12em] at (3.5,1) {2};
\node[draw,circle,inner sep=.12em] at (1,3.5) {1};
\node[draw,circle,inner sep=.12em] at (7.2,3.5) {1};
\node[draw,circle,inner sep=.12em] at (7.2,7.2) {3};
\node[draw,circle,inner sep=.12em] at (3.5,7.2) {2};
\node[draw,circle,inner sep=.12em] at (7.2,8.75) {1};
\node[draw,circle,inner sep=.12em] at (8.75,7.2) {2};
\node[draw,circle,inner sep=.12em] at (-1,5) {3};
\node[] at (4.5,4.5) {$u$};
\node[] at (2.5,2.5) {$\0$};
\end{tikzpicture}}
\end{equation}

In the labeled polytope shown in (\ref{123}),
every vertex has all three labels and is therefore an
equilibrium vertex in the game (\ref{AB}).
This is possible because every facet has an even number of
neighboring facets, which is a necessary condition.
It is also sufficient, using an easy case of the 4-color
theorem for planar graphs.
We apply it to the dual polytope $P^\Delta$ of~$P$, which is
simplicial. 
The facets of $P$ become vertices of $P^\Delta$ with an even
number of neighbors, and the vertices of~$P$ become
triangular facets of~$P^\Delta$.

\begin{proposition}
\label{3colors}
A triangulated planar graph where every vertex has even
degree is 3-colorable.
\end{proposition}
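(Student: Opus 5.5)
The plan is to prove the classical fact (due to Heawood) via the dual graph: a triangulated planar graph $G$ with all degrees even has a bipartite dual, and a 2-coloring of its faces propagates to a consistent 3-coloring of its vertices. We may assume $G$ is connected, since otherwise we color each component separately.

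\emph{Step 1: the faces of $G$ can be 2-colored.} Let $G^*$ be the planar dual of $G$, whose vertices are the triangular faces of $G$, with two faces adjacent when they share an edge. We must show that $G^*$ is bipartite. A cycle in $G^*$ is a closed curve in the sphere that crosses edges of $G$ transversally. The parity of the number of crossings depends only on the homology class of the curve mod~2, so it is determined by which vertices of $G$ the curve encloses. A small loop around a single vertex $v$ crosses exactly $\deg v$ edges, which is even. The face cycles of $G^*$ around the vertices of $G$ generate the cycle space of the planar graph $G^*$. Hence every cycle of $G^*$ is a mod-2 sum of even cycles and is therefore even, so $G^*$ is bipartite. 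Call the two color classes of faces ``white'' and ``black''. This step is the one that needs care: one has to justify that the vertex-loops, i.e.\ the faces of $G^*$, generate the cycle space. This is standard for connected plane graphs, with the facial cycles spanning the cycle space.

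\emph{Step 2: orient the edges and define the coloring.} Orient the boundary of every white triangle counterclockwise. Each edge lies in exactly one white and one black triangle, so every edge receives exactly one orientation, and each triangle boundary becomes a directed 3-cycle (clockwise for the black ones). Fix a base vertex $v_0$ and define the color $c(v)\in\mathbb{Z}_3$ by $c(v_0)=0$ and $c(w)=c(u)+1$ along each directed edge $u\to w$. For this to be well defined, every closed walk in $G$ must have net signed length $\equiv 0 \pmod 3$.

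\emph{Step 3: well-definedness.} The triangle boundaries generate the integral cycle space of the planar graph $G$, as the sphere has no first homology. Each triangle boundary is a directed 3-cycle, so its signed length is $\pm3\equiv0$. The signed length is additive over this generation, so every closed walk has signed length $\equiv 0 \pmod 3$ and $c$ is well defined.

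\emph{Step 4: properness.} Adjacent vertices $u\to w$ satisfy $c(w)=c(u)+1\neq c(u)$, so $c$ is a proper 3-coloring. Moreover, each triangle receives all three colors. This is exactly what is needed afterwards: the colors serve as the labels $1,2,3$ of the facets of $P$, so every vertex of $P$ is completely labeled.
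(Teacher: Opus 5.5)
Your proof is correct, but it does not follow the route the paper relies on. The paper gives no proof of its own; it refers to Tsai and West, whose argument is a greedy algorithm that builds the 3-coloring directly.

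You instead give the classical Heawood argument. Bipartiteness of the dual gives a 2-coloring of the faces. That 2-coloring gives an orientation in which every triangle is a directed 3-cycle. The coloring is then a $\mathbb{Z}_3$-valued potential for this orientation, well defined because the triangle boundaries generate the integral cycle space of~$G$.

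Your route isolates exactly where planarity is used: in the standard fact that facial cycles generate the cycle space of a connected plane graph (mod~$2$ for $G^*$, over $\mathbb{Z}$ for~$G$), i.e.\ $H_1(S^2)=0$. This also shows why the statement fails on other surfaces. For example, $K_7$ triangulates the torus with all degrees equal to~$6$, and its dual (the Heawood graph) is bipartite, so your Steps~1 and~2 go through. Step~3 breaks, as it must, since $K_7$ is not 3-colorable.

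A greedy proof is more hands-on and directly algorithmic. The price is that it must carry out by hand the consistency check that colors forced along different paths agree. Your Step~3 packages that check as a single homological statement.
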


See \cite{tsai2011} for history and a proof (via a rather
straightforward greedy algorithm; the color classes of the
3-coloring are in fact unique).

The 3-colorability of the graph of $P^\Delta$ means that
every vertex of $P$ is adjacent to three facets of~$P$ with
different colors.
With colors as labels, this means all vertices are
completely labeled.

\begin{corollary}
For $n=3$ or $n\ge 5$, there are nondegenerate $3\times n$
bimatrix games with $2n+1$ many Nash equilibria.
\end{corollary}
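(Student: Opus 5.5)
The plan is to combine the unit-vector-game construction of Proposition~\ref{uvg} with the 3-coloring of Proposition~\ref{3colors}. For each admissible $n$ we first build a simple 3-polytope $P'$ with $3+n$ facets, all of which are even-sided polygons. Then we label its facets with three colors so that every vertex sees all three labels.

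\textbf{Step 1: the polytope.} For $n=3$ take the cube. For odd $n\ge5$ take the prism $T\times[0,1]$ over an $(n+1)$-gon~$T$. Its $n+3$ facets are two $(n+1)$-gons, which are even-sided, together with $n+1$ rectangles. For even $n\ge6$ take the truncated prism over an $(n-2)$-gon described in (\ref{top}). We check that the facets are even-sided:
\begin{itemize}
\item the bottom facet remains an $(n-2)$-gon, which is even;
\item the top facet gains two vertices and becomes an $n$-gon;
\item the two sides through $u$ become hexagons;
\item the other sides and the three new facets are quadrilaterals.
\end{itemize}
In every case $P'$ is simple with $3+n$ facets, so by the Euler formula (\ref{euler}) it has $2n+2$ vertices.

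\textbf{Step 2: coloring.} Pass to the dual $P'^\Delta$. It is a simplicial 3-polytope whose graph is a triangulated planar graph. Each vertex of this graph corresponds to a facet of $P'$, and its degree equals the number of sides of that facet, which is even. Proposition~\ref{3colors} then gives a proper 3-coloring of this graph. Every triangle of $P'^\Delta$ corresponds to a vertex of $P'$, so every vertex of $P'$ lies on three facets of pairwise distinct colors. Using the colors $1,2,3$ as labels, every vertex of $P'$ is completely labeled.

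\textbf{Step 3: realizing the game.} Fix the vertex $v_0$ of $P'$. Apply an affine transformation mapping $v_0$ to $\0$ and its three incident facets to the coordinate hyperplanes $x_i\ge0$ (as in \citealp[Prop.~2.1]{vS1999}). Permute the colors so that the facet $x_i=0$ receives label~$i$. Translate and scale so that $\0$ lies on the facets $x_i=0$ and strictly inside the other constraints. Each remaining inequality can then be written as $(B\T x)_j\le1$, which gives a $3\times n$ matrix~$B$.

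Proposition~\ref{uvg} now yields a unit vector game $(A,B)$ whose equilibria correspond to the completely labeled vertices. These are all $2n+2$ vertices, one of which is the artificial equilibrium $\0$, leaving $2n+1$ Nash equilibria. The game is nondegenerate because $P'$ is simple and every vertex has exactly three labels. The same then holds on the $Q$ side, since $A$ consists of unit vectors and the complementary labels are determined uniquely; a small generic perturbation of $B$ keeps the combinatorial type if needed.

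The main obstacle is Step~1 for even $n$. One must confirm that the three cutting inequalities really produce exactly the described face structure: three quadrilaterals, an $n$-gon top, two hexagons, and a simple polytope. The concrete $n=6$ coordinates in (\ref{AB}) verify this, and the general case follows by the same local truncation applied at one vertex of any prism over an $(n-2)$-gon. Step~2 is immediate from the cited 3-colorability result.
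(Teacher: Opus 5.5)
Your proposal is correct and takes essentially the same route as the paper. You build even-sided simple 3-polytopes with $3+n$ facets (the cube, prisms over $(n+1)$-gons for odd $n$, the truncated prism over an $(n-2)$-gon for even $n\ge 6$), 3-color the dual graph via Proposition~\ref{3colors} so that every vertex is completely labeled, and then read off the unit-vector game of Proposition~\ref{uvg}, with $2n+2$ equilibria including the artificial one. You also fill in details the paper leaves implicit, namely bringing $P'$ into the form required by Proposition~\ref{uvg} and checking nondegeneracy; since the $Q$ side of a unit-vector game is a product of simplices, no perturbation of $B$ is actually needed.
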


\section{Games with four rows and four or five columns}
\label{s-4}

In Subsection~\ref{s-4x4} we identify the three
combinatorial types of best-response polytopes for $4\times 4$ games
with the maximum number of 16 equilibria (including the
artificial equilibrium).
This includes the dual cyclic polytope, for which we present
a general equilibrium construction based on a specific
``swap'' labeling of the second polytope~$Q$.
In Subsection~\ref{s-4x5} we prove by exhaustive computer
search that $4\times 5$ games have at most 18 equilibria. 
That bound is again attained for dual cyclic polytopes,
among others.

\subsection{$4\times 4$ games and general games from cyclic polytopes}
\label{s-4x4}

A nondegenerate $4\times 4$ game has at most 15 Nash
equilibria, as shown by \cite{keiding-1997} and \cite{MP-1999}.
By Theorem~\ref{UBT} and (\ref{GE}), the polytope $P$ can
have up to 20 vertices.

The proof by \cite{keiding-1997} uses the 
classification in \cite{GS-1967} of all 37 combinatorial
types of simple 4-polytopes with up to 8 facets,
by finding suitably many disjoint triangles as facets that
reduce the number of equilibrium vertices by
Lemma~\ref{l-odd}.
In this section, we show that three different polytopes can
be used to obtain 16 equilibria.
One of them is the 4-dimensional cube that corresponds to
the coordination game $(A,B)$ with $A$ and $B$ as the
$4\times 4$ identity matrix.

Three of the simple 4-polytopes with 8 facets
in \cite{GS-1967} have 20 vertices and are dual-neighborly.
One of them is the dual cyclic polytope $C_4(8)^\Delta$.
In order to obtain 16 equilibria, both $P$ and $Q$ can be
chosen to be this polytope, but their facet labelings
have to be different.
Namely, assume that the facets of $P$ are labeled with
$1,\ldots,8$ in the order of the positions of the Gale
evenness bitstrings that represent the vertices, as per the
order of $t_1<\cdots<t_8$ in the definition of
$C_4(8)^\Delta$ in~(\ref{cyc}).
For the labels of $Q$, we use the following \textit{swap
permutation} that exchanges the odd and even positions
\begin{equation}
\label{swap}
(1,2,3,4,5,6,7,8)\mapsto (2,1,4,3,6,5,8,7).
\end{equation}
For Gale evenness strings of even length $d+n$, we define 
this swap permutation $\pi:[d+n]\to[d+n]$ generally as 
\begin{equation}
\label{pi}
\pi(i)=\begin{cases}
i+1 & \hbox{if $i$ is odd}\\
i-1 & \hbox{if $i$ is even}\\
\end{cases}
\quad
\hbox{for $i\in[d+n].$}
\end{equation}
The dual cyclic polytope $Q$ is labeled such that its $i$th
facet (corresponding to the $i$th position in a Gale
evenness bitstring) gets label~$\pi(i)$, for $i\in[d+n]$.

Remark:
This construction was used in \cite{vS1999} to create a
$6\times6$ game with 75 Nash equilibria, and $d\times d$ games
with about $(1+\sqrt 2)^d/\sqrt d$ many equilibria.
This is much larger than the $2^d$ equilibria for the
coordination game, but short of the bound of about
$2.6^d/\sqrt d$ from the Upper Bound Theorem, using
Stirling's formula for~(\ref{GE}).

The swap permutation $\pi$ for the labels of $Q$ has the
following effect.
Let $d$ and $n$ be even, $d=2\ell$ and $n=2k$.
Call a bitstring with $d$ 1's and $n$ 0's
$(\ell,k)$-\textit{swappable} if it is composed of substrings of the
form 0110, 11, or 00
(this requires $d$ and $n$ to be even).
If there are $i$ substrings of the form 0110, where
$0\le i\le \ell$, then there are $l-i$ strings 11 (to get
$2\ell$ 1's in the string) and $k-i$ strings 00 (to get $2k$
0's).
Therefore, the number of $(\ell,k)$-swappable strings is the
sum of multinomial coefficients
\begin{equation}
\label{sigma}
\sigma(\ell,k)=\sum_{i=0}^\ell\frac{(k+\ell-i)!}{i!\,(\ell-i)!\,(k-i)!} 
=\sum_{i=0}^\ell\binom \ell i \binom{k+\ell-i}{\ell} \,.
\end{equation}
Clearly, every swappable string fulfills Gale evenness
and is therefore a vertex of~$P$.
The complement of such a string (changing 0 to~1 and vice
versa) should represent a vertex of~$Q$ to obtain an
equilibrium, taking the labels of $Q$ into account.
That complemented string is composed of substrings 1001, 00, and 11,
respectively, which under the re-labeling (\ref{pi})
become 0110, 00, and 11, and therefore again fulfills Gale
evenness (it is then $(k,\ell)$-swappable).
Hence, every $(\ell,k)$-swappable string represents an
equilibrium vertex of~$P$.
In addition, the $(\ell-1,k-1)$-swappable strings preceded
by 10 and succeeded by 01 also have this property
(for example, if $\ell=k=2$, then these are the three
strings
$10\,1100\,01$,
$10\,0011\,01$,
and
$10\,0110\,01$). 
With the labeling (\ref{pi}) of~$Q$, these represent exactly
the equilibrium vertices of~$P$.
We have shown the following, a generalization of
\citet[Thm.~4.2]{vS1999} that considered the case $d=n$.

\begin{theorem}
\label{t-cyc}
Let $d=2\ell$ and $n=2k$.
There is a $d\times n$ game with
$\sigma(\ell,k)+\sigma(\ell-1,k-1)$ many equilibria.
\end{theorem}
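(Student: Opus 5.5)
The game is built exactly as described before Theorem~\ref{t-cyc}. Take $P=C_d(d+n)^\Delta$ with facets labeled $1,\ldots,d+n$ in the order $t_1<\cdots<t_{d+n}$, and $Q=C_n(d+n)^\Delta$ with its $i$th facet relabeled $\pi(i)$ via (\ref{pi}). Choose a completely labeled pair to serve as the artificial equilibrium; for instance, take $b=1^d0^n$, which is swappable. Then apply \citet[Prop.~2.1]{vS1999} to transform $(P,Q)$ into the form (\ref{PQ}), which yields a nondegenerate $d\times n$ game. Because equilibria depend only on the combinatorial structure, I will count completely labeled vertex pairs using Gale evenness strings.

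\textbf{Identifying equilibria.} A vertex $x$ of $P$ is a string $b$ with $d$ ones satisfying Gale evenness. The set of labels that $x$ lacks is $\{i:b_i=0\}$, and these must be exactly the labels of a vertex of $Q$. A vertex of $Q$ with label set $L$ corresponds to the Gale string $c$ (with $n$ ones) whose ones sit at positions $\pi^{-1}(L)=\pi(L)$. So $x$ is an equilibrium vertex if and only if the string $c$, defined by $c_{\pi(i)}=1-b_i$, satisfies Gale evenness. Since $\pi$ is an involution swapping positions $2j-1$ and $2j$, $c$ is obtained from $b$ by complementing and then reversing each aligned pair. On aligned pairs, $00\leftrightarrow11$, $01\mapsto01$, and $10\mapsto10$. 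This gives a clean pairwise description of the condition.

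\textbf{Counting (the main step).} Cut $b$ into aligned pairs. By evenness of blocks, a Gale string of even length splits into aligned pairs of types $00$, $11$, $01$, $10$, and the mixed pairs must match up. In particular, the interior runs of ones have even length, so an interior run that starts at an even position forces an adjacent $01$ and $10$. I would argue by case analysis on the mixed pairs.

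If no mixed pair occurs at the ends, every mixed pair must be part of an aligned block $01\,10$ (the substring $0110$ at positions $2j,\ldots,2j+3$). This holds for $b$ by Gale evenness, and the same then holds for $c$. If $b$ contains a $10$ not followed by $01$, I need to show this forces a contradiction in either $b$ or $c$; this is the step where I expect the real work.

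If $b$ starts with a single $1$, it begins with $10$, and evenness forces it to end with $01$. The interior then reduces to the previous case with parameters $(\ell-1,k-1)$.

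This case analysis shows that $x$ is an equilibrium vertex exactly when $b$ is $(\ell,k)$-swappable, or is $10\,s\,01$ with $s$ $(\ell-1,k-1)$-swappable. The converse direction, that these strings do give Gale-even $c$, is already verified in the text.

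\textbf{Finishing.} The two families are disjoint, since only the second starts with $10$. Every equilibrium vertex belongs to exactly one equilibrium, so the equilibrium count is $\sigma(\ell,k)+\sigma(\ell-1,k-1)$ by (\ref{sigma}). This count includes the artificial equilibrium; I would double-check against the theorem's intended convention, which matches the $4\times4$ case giving 16.

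The main obstacle is the exhaustiveness argument: ruling out mixed pairs that are not arranged as $0110$ or as the end pattern $10\ldots01$. I would handle it by induction along the string, tracking the parity of the current run of ones in $b$ and in $c$ simultaneously.
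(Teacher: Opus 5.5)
Your construction, the pairwise rule ($00\leftrightarrow11$, $01\mapsto01$, $10\mapsto10$) and the two families you count are the same as in the paper. The inclusion direction is the text's own verification. The gap is the exhaustiveness step. It is the part that turns ``at least'' into ``exactly'' (the paper itself only asserts it), and you leave it as a plan whose stated reasoning is wrong.

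\begin{itemize}
\item Gale evenness of $b$ alone does \emph{not} force mixed pairs into aligned blocks $01\,10$. The string $b=00\,01\,11\,10\,00$ has a single interior run of four 1's, so it is a vertex of $C_4(10)^\Delta$, yet its mixed pairs are separated by $11$. It is excluded only because $c=11\,01\,00\,10\,11$ contains $010$.
\item Starting with a single 1 does not force $b$ to end in $01$. The string $b=10\,00\,01\,11$ is Gale even, and only $c=10\,11\,01\,00$ fails.
\end{itemize}
So ``this holds for $b$ by Gale evenness, and the same then holds for $c$'' has the logic backwards: the structure comes from the two conditions jointly. As stated, your case split also misses strings whose first pair is $01$ or whose last pair is $10$. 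These include genuine equilibrium vertices such as $01\,10\,11\,00$, and also non-equilibrium vertices like $11\,10\,00\,01$. Also, an aligned block $0110$ occupies positions $2j-1,\ldots,2j+2$, not $2j,\ldots,2j+3$.

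No induction is needed; a local lemma closes the gap.
\begin{itemize}
\item Suppose a pair $01$ of $b$ is followed by a pair $Y$. If $Y=00$, then $b$ contains $010$. If $Y=11$, it becomes $00$ in $c$, so $c$ contains $010$. If $Y=01$, both contain $010$. Hence every $01$ other than the last pair is immediately followed by $10$.
\item Symmetrically, every $10$ other than the first pair is immediately preceded by $01$: a preceding $00$ violates $b$, $11$ violates $c$, and $10$ violates both.
\item So all mixed pairs lie in aligned blocks $01\,10$, except possibly a leading $10$ and a trailing $01$. Since $2\#(11)+\#(01)+\#(10)=d$ is even, these two exceptional pairs occur together or not at all.
\end{itemize}
This gives exactly the $(\ell,k)$-swappable strings and the strings $10\,S\,01$ with $S$ $(\ell-1,k-1)$-swappable. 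Your count $\sigma(\ell,k)+\sigma(\ell-1,k-1)$ then follows, including the artificial equilibrium as in the paper's convention.
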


For the $4\times 4$ game with $\ell=k=2$, we therefore have
$\sigma(2,2)+\sigma(1,1)=13+3=16$ equilibria.
The permutation in (\ref{swap}) for labeling $Q$ is not
unique -- it can be cyclically shifted and reversed, because
the set of Gale evenness strings has these symmetries.

Apart from the $4\times 4$ game with $P$ as the 4-cube, there is
another neighborly polytope called $N_8$ in 
\citet[p.~125]{gruenbaum-2003} that in its dual version
$N_8^\Delta$ gives rise to 16 equilibria if both $P$
and $Q$ are set to this polytope.
The 20 vertices of $N_8^\Delta$ are given as follows, by listing
the respective facets $1,\ldots,8$ that each vertex lies on;
we could also list them as bitstrings but they are not Gale
evenness strings.
\begin{equation}
\label{n8}
  \begin{array}{ccccc}
  1234, & 1237, & 1245, & 1256, & 1268, \\
  1278, & 1348, & 1378, & 1458, & 1568, \\
  2345, & 2356, & 2367, & 2678, & 3456, \\
  3467, & 3478, & 4567, & 4578, & 5678.
  \end{array}
\end{equation}
We consider the permutation 
\begin{equation}
\label{permn8}
(1,2,3,4,5,6,7,8)\mapsto (2,1,5,6,3,4,8,7) 
\end{equation}
as the corresponding labeling of $Q$ for the same polytope.
For example, applying this permutation to 1237 gives 2158,
which in ascending order of digits is 1258, whose complement
3467 is, like 1237, a vertex of $N_8^\Delta$.
Hence, these vertices form an equilibrium pair.

The following table lists the vertices in (\ref{n8}) after
applying the permutation (\ref{permn8}) and sorting the
digits, which are the facet labels of the vertices of~$Q$.
We underline those vertices $y$ which form a complement of
a vertex $x$ of $P$ in (\ref{n8}); this gives 16 equilibria
$(x,y)$.
\[
  \begin{array}{ccccc}
  \underline{1256}, & \underline{1258}, & \underline{1236}, & \underline{1234}, & 1247, \\
  \underline{1278}, & \underline{2567}, & 2578, & \underline{2367}, & \underline{2347}, \\
  1356, & \underline{1345}, & \underline{1458}, & \underline{1478}, & \underline{3456}, \\
  \underline{4568}, & \underline{5678}, & 3468, & \underline{3678}, & \underline{3478}.
  \end{array}
\]
We found (by testing all $8!$ possibilities) four permutations that give 16
equilibria, by replacing $(2,1,5,6,3,4,8,7)$ in (\ref{permn8}) 
by $(3,4,1,2,7,8,5,6)$,
$(6,5,8,7,2,1,4,3)$, or
$(7,8,4,3,6,5,1,2)$.

There is a third neighborly 4-polytope with 8 vertices
called $N_8^*$ in \citet[p.~125]{gruenbaum-2003}.
However, its dual polytope has a stable-set bound from
Theorem~\ref{t-ssb} of 14 and therefore cannot have 16
equilibria.
There are no further polytopes with a stable-set bound
of~16, and both polytopes have to be combinatorially
isomorphic to obtain 16 equilibria.

In summary, $4\times 4$ games with 16 equilibria can be
obtained in three combinatorially different ways from
polytopes $P$ and $Q$.
In all three cases, $P$ and $Q$ have the same combinatorial
structure, which is either that of the 4-cube, the dual cyclic
polytope $C_4(8)^\Delta$, or $N_8^\Delta$ from
\citet[p.~125]{gruenbaum-2003}.

\subsection{$4\times 5$ games}
\label{s-4x5}

For a $4\times 5$ game, $P$ and $Q$ have 9 facets.
According to Theorem~\ref{UBT}, $P$ has at most 27 and $Q$
at most 30 vertices.
We show that the number of equilibria is substantially
smaller than the upper bound of~26.

\begin{lemma}
\label{l-17}
There exists a nondegenerate $4 \times 5$ game 
with $17$ Nash equilibria.
\end{lemma}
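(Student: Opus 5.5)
The plan is to obtain the $4\times 5$ game by deleting one column from the $4\times 6$ game of Theorem~\ref{t-cyc}, and to choose the deleted column so that many equilibria survive. For $d=4$, $n=6$, that is $\ell=2$, $k=3$, Theorem~\ref{t-cyc} gives
\[
\sigma(2,3)+\sigma(1,2)=(10+12+3)+(3+2)=30
\]
equilibria, including the artificial one. Here $P=C_4(10)^\Delta$, and $Q=C_6(10)^\Delta$ is labeled by the swap permutation~(\ref{pi}). By construction the $30$ equilibrium vertices of $P$ are exactly the swappable bitstrings together with the strings $10\,s\,01$ for $(1,2)$-swappable $s$. Each of these strings has four 1's and six 0's.

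The key observation concerns removing a column $j$ of $(A,B)$. This deletes the inequality with label $d+j$ from both $P$ and~$Q$. Let $(x,y)$ be an equilibrium with $y_j=0$, so that label $d+j$ belongs to~$y$. Then $(x,y)$, with the $j$th coordinate of $y$ dropped, is still completely labeled in the smaller game, and it is therefore an equilibrium of the $4\times5$ game. Nondegeneracy is inherited, because removing an inequality cannot create extra tight constraints at a point. So I count, for each position $p\in[10]$ of the Gale evenness strings, the number $z(p)$ of the $30$ equilibrium strings that have a $0$ at position~$p$.

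Two facts settle the existence of a good position.

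1. There are $30\cdot 6=180$ zero-incidences in total, spread over $10$ positions. Hence some $p$ has $z(p)\ge 18$.

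2. A position $p$ can serve as a column to delete only if it is a column label, that is, a label of~$Q$ at the artificial equilibrium. By \citet[Prop.~2.1]{vS1999}, any completely labeled pair can be moved to $(\0,\0)$ by an affine map and a relabeling. I therefore choose as the artificial equilibrium one of the equilibria whose $P$-string has a $0$ at~$p$; one exists since $z(p)\ge18>0$. After this transformation, the six zero positions of that string become the column labels $d+1,\ldots,d+n$, and $p$ is among them.

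Deleting that column leaves at least $18$ equilibria, including the artificial one. This gives a nondegenerate $4\times5$ game with at least $17$ Nash equilibria.

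The main obstacle is to be careful about which labels may legitimately be deleted, which is exactly the relabeling in step~2. If one prefers an explicit example, the fallback is to compute $z(p)$ directly from the explicit list of swappable strings. By the reversal symmetry of the list, $z(p)=z(11-p)$, so I would expect a middle position to reach $18$.

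If the argument produces more than $17$ Nash equilibria, that is harmless for the existence statement. By parity (Proposition~\ref{p-index}(a)) the total count is even anyway. Combined with the upper bound proved next, the count is then exactly $17$.
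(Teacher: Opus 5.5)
Your route differs from the paper's, and apart from one step it is sound.

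\textbf{Comparison with the paper.} The paper builds the $4\times5$ game directly. It takes $P=C_4(9)^\Delta$ and $Q=C_5(9)^\Delta$, labels $Q$ by the swap permutation with label~$9$ fixed, and lists all $27$ Gale evenness strings of~$P$. Exactly $13+5=18$ of them are equilibrium vertices.

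You instead delete a column from the $4\times6$ game of Theorem~\ref{t-cyc}. You first relabel so that the chosen position $p$ becomes a column label, and then average: $30\cdot6/10=18$. The averaging needs only the count $30$ and the fact that every vertex string has six 0's, not the exact list of strings. This gives a general principle: a nondegenerate $d\times n$ game with $E$ equilibria yields a $d\times(n-1)$ game with at least $En/(d+n)$ equilibria.

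Your fallback computation also settles completely. The set of equilibrium strings is invariant under even cyclic shifts and under reversal, because $\pi$ commutes with both. These symmetries act transitively on the ten positions, so $z(p)=18$ for every~$p$. For $p=10$ the survivors are the $(2,2)$-swappable strings followed by $00$ and the $(1,2)$-swappable strings followed by $0110$. That is $13+5$, the paper's count, and in general the even--odd count of Theorem~\ref{t-limit}.

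The paper's explicit list gives the exact equilibrium set. Your argument gives only a lower bound, so the exact value~$17$ relies on Theorem~\ref{45-17}. That is not circular, since its upper-bound argument does not use this lemma.

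\textbf{The gap: nondegeneracy is not automatically inherited.} Deleting column~$j$ does not remove an inequality from~$Q$; it replaces $Q$ by its facet $y_j=0$, so that side is indeed inherited. But it does remove an inequality from~$P$, which enlarges~$P$. At the newly exposed points, the nondegeneracy of the original game says nothing.

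Deleting a column can in fact create degeneracy. Take a nondegenerate $2\times3$ game where player~2's payoffs are $(1,1,2)$ in row~1 and $(0,3,1)$ in row~2. Column~3 is the unique best response to row~1, but after deleting it, row~1 has two pure best responses.

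The repair is easy. Each surviving equilibrium $(x,y')$ is a vertex pair with exactly $4$ resp.\ $5$ tight constraints, these have linearly independent normals, and all other constraints are strict. Such a pair persists under any sufficiently small perturbation of the reduced payoff matrices. A generic such perturbation makes the $4\times5$ game nondegenerate while keeping all $18$ equilibria.

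Alternatively, for this particular game, no five of the nine remaining facet hyperplanes of $P$ meet, even projectively. This holds because any five points of the moment curve in $\reals^4$ are affinely independent.
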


\proof
The following construction is based on cyclic polytopes and
yields $18$ equilibria.
We let $P=C_4(9)^\Delta$ and $Q=C_5(9)^\Delta$.
The labels of $Q$ will be permuted with the permutation
\begin{equation}
\label{swap9}
(1,2,3,4,5,6,7,8,9)\mapsto (2,1,4,3,6,5,8,7,9)\,,
\end{equation}
which is similar to (\ref{swap}) except that label~9 stays
fixed.
The following is the list of the 27 Gale evenness strings
that represent the vertices of $P$, where we underline those
that have an equilibrium partner in $Q$ under the
permutation in (\ref{swap9}).
These are exactly the 13 $(2,2)$-swappable strings followed by
a terminal~digit 0, or an initial substring 10 followed by a
$(1,2)$-swappable string followed by a terminal digit~1;
the number of the latter strings is $\sigma(1,2)=5$.
\begin{equation}
\label{ge49}
\begin{array}{c@{\quad}c@{\quad}c@{\quad}c@{\quad}c}
    \underline{11\,11\,00\,00\,0} &
    \underline{11\,01\,10\,00\,0} &
    \underline{11\,00\,11\,00\,0} &
    \underline{11\,00\,01\,10\,0} &
    \underline{11\,00\,00\,11\,0} \\
               11\,00\,00\,01\,1 & 
               01\,11\,10\,00\,0 & 
    \underline{01\,10\,11\,00\,0} &
    \underline{01\,10\,01\,10\,0} &
    \underline{01\,10\,00\,11\,0} \\
               01\,10\,00\,01\,1 & 
    \underline{00\,11\,11\,00\,0} &
    \underline{00\,11\,01\,10\,0} &
    \underline{00\,11\,00\,11\,0} &
               00\,11\,00\,01\,1 \\
               00\,01\,11\,10\,0 & 
    \underline{00\,01\,10\,11\,0} &
               00\,01\,10\,01\,1 &
    \underline{00\,00\,11\,11\,0} &
               00\,00\,11\,01\,1 \\
               00\,00\,01\,11\,1 &
               11\,10\,00\,00\,1 &
    \underline{10\,11\,00\,00\,1} &
    \underline{10\,01\,10\,00\,1} &
    \underline{10\,00\,11\,00\,1} \\
    \underline{10\,00\,01\,10\,1} &
    \underline{10\,00\,00\,11\,1} 
\end{array}      
\end{equation}  
This shows that this polytope pair has $13+5=18$ equilibria and the
corresponding game therefore 17 Nash equilibria.
\endproof 

It turns out that this is also the maximum number.
This based on a computer analysis of existing lists of
all combinatorial types of simplicial 4-polytopes and
5-polytopes with 9 vertices, which represent in the same way
the combinatorial structure of their dual counterparts, the
respective simple polytopes with 9 facets.

For the simplicial $4$-polytopes with 9 vertices, we use the
data of \cite{firsching-2017}, with coordinates
for the respective polytopes, of which there are 1142
combinatorial types.

A classification of 5-polytopes with 9 facets was 
provided by \cite{fmm-2013}.
There are 47923 combinatorial types, out of which 322
are simple (see also \citealp{firsching-2017}).
We use coordinate data for these 322 types, which was
provided to us by Moritz Firsching. 

\begin{theorem}
\label{45-17}
The maximum number of Nash equilibria in a nondegenerate $4
\times 5$ game is 17.
\end{theorem}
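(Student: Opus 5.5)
Lemma~\ref{l-17} gives the lower bound of 17 Nash equilibria. It remains to show that no nondegenerate $4\times5$ game has more than 17, that is, no labeled pair $P,Q$ has more than 18 completely labeled vertex pairs (counting the artificial one). Since the number of equilibria is even by Proposition~\ref{p-index}(a), it suffices to exclude 20 equilibria.

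Only the combinatorial structure of $P$ and $Q$ matters. So we may assume that $P$ is one of the 1142 combinatorial types of simple 4-polytopes with at most 9 facets, dual to the data of \cite{firsching-2017}. Polytopes with fewer facets, arising from redundant inequalities, can be handled by treating the missing labels as attached to no facet, or by padding. Likewise $Q$ is one of the 322 simple 5-polytopes with 9 facets from \cite{fmm-2013}.

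The first step is a filter using the Stable-Set Bound, Theorem~\ref{t-ssb}. For each candidate graph of $P$ (and separately of $Q$) we compute the maximum size $s$ of a pair of disjoint equal-size stable sets; if $2s<20$, the polytope is discarded. We expect this to eliminate the vast majority of types. For example, any $P$ with fewer than 20 vertices is discarded, as is any $P$ whose graph contains many triangles or other odd structure; Lemmas~\ref{l-odd} and~\ref{l-simplex} give further cheap cuts. This leaves short lists $\mathcal P$ and $\mathcal Q$ of surviving types.

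The second step is exhaustive: for each pair $(P,Q)\in\mathcal P\times\mathcal Q$ and each of the $9!$ bijections between the facet labels of $Q$ and $[9]$ (the labels of $P$ fixed), count the vertices $x$ of $P$ whose complementary facet set is a vertex of $Q$ under the relabeling. We verify that the count never exceeds 18. This can be organized efficiently. Encode the vertex sets of $P$ and $Q$ as sets of 4-subsets and 5-subsets of $[9]$. For each permutation, map the complements of $P$'s vertices and test membership in a hash set for $Q$'s vertices. Symmetries of $P$ and $Q$ reduce the work, and pruning prefers partial assignments: assign labels incrementally and bound the achievable count by the number of $P$-vertices not yet refuted.

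The main obstacle is making this search feasible and trustworthy. $9!\approx 3.6\cdot10^5$ per pair is fine only if the filter leaves few pairs, so the strength of the stable-set preprocessing (in particular for $Q$, whose graph has up to 30 vertices) is crucial. If too many types survive, I would first try strengthening the filter: require that the stable-set bound hold for both $P$ and $Q$, and that $P$ have at least 20 vertices that can be matched to $Q$-vertices by vertex count. Finally, the theorem is asserted, not computed: the claim that 17 is the maximum rests on the correctness of the code and of the input classifications, so cross-checks are needed. These include reproducing the 18 for $C_4(9)^\Delta,C_5(9)^\Delta$ and the known 16 for the $4\times4$ case.
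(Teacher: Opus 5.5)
Your proposal is correct and follows the paper's proof. Parity reduces the task to excluding 20 equilibria, the stable-set bound cuts the 1142 simple 4-polytopes and 322 simple 5-polytopes with 9 facets down to 39 and 68 types (each with bound exactly 20), and an exhaustive check of all $9!$ relabelings on the $39\cdot 68$ pairs yields at most 18 completely labeled pairs. Your side remark on fewer than 9 facets covers a case the paper passes over, and it is easily settled: in a nondegenerate game a redundant inequality is never tight, so its label must come from the other polytope at every equilibrium, which reduces to a $4\times4$ or $3\times5$ game with at most 16 or 12 equilibria (including the artificial one).
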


\proof
Because the number of equilibria is even,
it suffices to show that 20 or more equilibria are not possible.
Assume that there exists a non-degenerate $4 \times 5$ game with 
20 equilibria.
Then both simple polytopes must have a stable-set bound of
at least 20. 

Among all 1142 types of simple $4$-polytopes with 9 facets,
the maximum of the stable-set bounds according to
Theorem~\ref{t-ssb} has been found to be~20.
There are 39 types where the stable-set bound gives~20.
Among all the 322 types of simple $5$-polytopes with 9 facets, 
the maximum of the stable-set bounds is also 20.
There are 68 types with that bound of~20.

Hence, the 4-polytope $P$ is among the 39 types with
stable-set bound 20 and the 5-polytope $Q$
is among the 68 types with stable-set bound~20.

We have run a computer program which inspects the $39 \cdot 68$
combinations on all 9! relabelings.
From the output of the program, all the polytope pairs and
all the relabelings give at most 18 equilibria.
This proves the claim.
The code and the data used in the proof will be made public.
\endproof

\section{Cyclic polytopes and games} 
\label{s-cyc}

In this section we prove tight and, for growing $n$,
asymptotically tight bounds on the equilibrium numbers of
$d\times n$ games based on dual cyclic polytopes.

\subsection{$4\times n$ games}
\label{s-4xn}

For $4\times n$ games, Theorem~\ref{UBT} shows that they can
have at most $\binom{n+2}2+\binom{n+1}1=(n^2+5n+4)/2$ many
equilibria.
By subtracting the artificial equilibrium, this agrees with
the bound of $(n^2+5n+2)/2$ many Nash equilibria 
by \cite{QS-1994}.
Their bound is based on the Euler characteristic of a
polyhedral complex in dimension~3.

We consider now the game where
$P$ is the dual cyclic polytope $C_4(4+n)^\Delta$, and
$Q$ is the dual cyclic polytope $C_n(4+n)^\Delta$ labeled
with the swap permutation in (\ref{pi}).
Let $n$ be even, $n=2k$.
According to Theorem~\ref{t-cyc} and (\ref{sigma}),
the resulting equilibrium number is
$\sigma(2,k)+\sigma(1,k-1)$, that is,
\begin{equation}
\label{4n}
\arraycolsep 0.1em
\begin{array}{cl}
&\binom{2}{0}\binom{k+2}{2}+
\binom{2}{1}\binom{k+1}{2}+
\binom{2}{2}\binom{k}{2}+
\binom{1}{0}\binom{k}{1}+
\binom{1}{1}\binom{k-1}{1} 
\\[1ex]
=&2k^2+2k+1~+~2k-1
~=~2k^2+4k
~=~{n^2}/2+2n~.
\end{array}
\end{equation}
The difference to the upper bound $(n^2+5n+4)/2$ is
$n/2+2$ or $k+2$\,.
This agrees with the observation that the only Gale evenness
strings for $C_4(4+n)^\Delta$ that are not swappable are
those given by a substring 011110 followed by $n-2$ 0's,
or any cyclic shift by an even amount of such a string;
there are $k+2$ such even cyclic shifts. 
We now use those substrings 011110 on an even position to
create suitable obstructions that lose one equilibrium
vertex each time, using the triangle bound in
Lemma~\ref{l-odd}.

\begin{theorem}
\label{4opt}
For a $4\times n$ game with even $n$ where $P$ and $Q$ in
$(\ref{PQ})$ are dual cyclic polytopes, the swap permutation
$(\ref{pi})$ for the labels of $Q$ gives the maximum number
of equilibria. 
\end{theorem}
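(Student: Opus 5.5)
The plan is to show that for any labeling of $Q=C_n(4+n)^\Delta$, the game has at most $n^2/2+2n$ equilibria. This is the number obtained in (\ref{4n}) with the swap permutation, and it is $k+2$ below the number $g(4,n)=(n^2+5n+4)/2$ of vertices of $P=C_4(4+n)^\Delta$. Equivalently, we must exhibit, for \emph{every} labeling, at least $k+2$ vertices of $P$ that are not equilibrium vertices. Since only the vertex--facet structure of $P$ matters for its own equilibrium vertices, we will work entirely with $P$ and its Gale evenness strings of length $N=n+4$. We will use the cyclic symmetry of these strings for $d=4$ and look for disjoint obstructions in $P$, each of which forces the loss of at least one equilibrium vertex. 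By Lemma~\ref{l-simplex}, a triangle face of $P$ contains at most two equilibrium vertices. This is also implied by Lemma~\ref{l-odd}, and it suffices to lose one vertex per triangle.

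First I will identify the triangular 2-faces of $P$. A 2-face of the simple 4-polytope $P$ is the intersection of two facets $i,j$, and its vertices are the Gale evenness strings with 1's at positions $i$ and $j$. Taking $i,j$ cyclically adjacent, say $j=i+1$, the face consists of strings containing the substring $11$ at positions $i,i+1$ together with another block of two 1's. Such a face is a polygon with many vertices. Instead I will take $i,j$ at cyclic distance~$2$, that is, facets $i$ and $i+2$. A Gale evenness string with 1's at $i$ and $i+2$ must contain $11$ blocks covering both positions. The possibilities are that position $i+1$ is $1$, which gives one of $0111\,10$ or $01\,1110$ (cyclically), or that position $i+1$ is $0$, which gives $11\,0\,11$ around it. So this 2-face has exactly three vertices: the strings $\ldots 1111\ldots$ occupying positions $i-1..i+2$ or $i..i+3$, and the string with 1's at $i-1,i,i+2,i+3$. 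It is therefore a triangle $T_i$.

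Next I will choose $k+2$ of these triangles that are pairwise vertex-disjoint, so that the losses add up. The vertices of $T_i$ are the strings with 1-set $\{i-1,i,i+1,i+2\}$, $\{i,i+1,i+2,i+3\}$, and $\{i-1,i,i+2,i+3\}$. Two triangles $T_i$ and $T_{i'}$ share a vertex only if $|i-i'|\le1$ cyclically, through the shared 4-block string. Taking $i$ over all positions of one parity gives $N/2=k+2$ triangles, with indices differing by $2$, hence vertex-disjoint. The vertices involved are exactly the non-swappable strings $011110$ at even shifts, together with strings of type $11011$. By Lemma~\ref{l-simplex}, each $T_i$ loses at least one equilibrium vertex, which gives at most $g(4,n)-(k+2)=n^2/2+2n$ equilibria. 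This equals the count in (\ref{4n}), so the swap permutation is optimal.

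The main obstacle is the step that identifies the face structure correctly. I must check carefully, including at the wrap-around where the cyclic symmetry for even $d$ is used, that the intersection of facets $i$ and $i+2$ has exactly three vertices. I also need to confirm that these triangles are disjoint for indices of the same parity, including when $N$ is small (say $n=2$, where $N=6$ and the wrap-around is tight). If disjointness failed, I would fall back on a stable-set style argument via Theorem~\ref{t-ssb} on the union of triangles.
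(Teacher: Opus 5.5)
Your proof is correct and is essentially the paper's argument: your triangle $T_i$ (the 2-face cut out by facets $i$ and $i+2$) is exactly the triangle (\ref{ge4}) for $i=2$, and your $k+2$ same-parity choices of $i$ are the paper's even cyclic shifts, each losing one equilibrium vertex, which gives the bound $g(4,n)-(k+2)=n^2/2+2n$. One small correction: your claim that $T_i$ and $T_{i'}$ meet only when $|i-i'|\le 1$ fails for $n=2$ (for example $T_1\cap T_4\ni\{1,3,4,6\}$), but the same-parity triangles are still vertex-disjoint, because every vertex has exactly two 1's in even and two in odd positions ($N$ is even and cyclic blocks of 1's have even length), so a vertex of $T_i$ determines $\{i,i+2\}$ and hence $i$.
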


\proof
Let $n=2k$.
By assumption, $P=C_4(4+n)^\Delta$.
We claim that $P$ has $k+2$ disjoint triangles as faces,
each of which loses one equilibrium vertex by
Lemma~\ref{l-odd}.
As shown in (\ref{4n}), this is exactly the number of
non-equilibrium vertices when using the swap permutation
(\ref{pi}).
To show the claim, consider the following triple of Gale
evenness strings:
\begin{equation}
\label{ge4}
\begin{array}{l}
01\,11\,10~0^{2k-2}\\
11\,01\,10~0^{2k-2}\\
11\,11\,00~0^{2k-2}\\
\end{array}
\end{equation}
This triple of Gale evenness strings represents a triangle
as a face of $P$, because any two of them have three facets
(with bits~1) in common; the intersection of these three
facets defines an edge of~$P$.
The first string in (\ref{ge4}) is not swappable, but the
other two are.
Each cyclic shift of the strings in (\ref{ge4}) by an even
number of places creates a new, disjoint triangle.
There are $k+2$ such shifts, which proves the claim.
\endproof

Dual cyclic polytopes are dual-neighborly polytopes, which
have the maximum possible number of vertices.
They have a very regular structure described by the Gale
evenness condition, which allows to identify matching vertex
pairs via the swap permutation (\ref{pi}).
One may therefore conjecture that Theorem~\ref{4opt}
describes games with the maximum possible number of
equilibria.
However, this conjecture is near-impossible to prove when
considering all dual-neighborly polytopes because they
are extremely numerous, even for known constructions
\citep{shemer1982}, and not fully characterized.
There are only 37 combinatorial types of simple 4-polytopes
with 8 facets, of which 3 are dual-neighborly.
There are already 159375 dual-neighborly 5-polytopes with 10
facets \citep{firsching-2017}, among an unknown number of
simple polytopes.

We will give an asymptotic bound for general $d\times n$
games in Subsection~\ref{s-drows}.

\subsection{$6\times n$ games}
\label{s-6}

The considerations of Theorem~\ref{4opt} can be extended to
games with six rows.

\begin{theorem}
\label{6opt}
For a $6\times n$ game with even $n$ where $P$ and $Q$ in
$(\ref{PQ})$ are dual cyclic polytopes, the swap permutation
$(\ref{pi})$ for the labels of $Q$ gives the maximum number
of equilibria. 
\end{theorem}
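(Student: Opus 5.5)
The plan is to follow the proof of Theorem~\ref{4opt}: count exactly which vertices of $P=C_6(6+n)^\Delta$ fail to be equilibrium vertices under the swap labeling, and then partition exactly these vertices among pairwise vertex-disjoint simplex faces of $P$. Each such face must lose at least as many equilibrium vertices as it contains failing vertices. Lemma~\ref{l-odd} handles triangles (loss $\ge1$) and Lemma~\ref{l-simplex} handles tetrahedra (loss $\ge2$). Summing these forced losses over disjoint faces gives an upper bound for \emph{every} labeling of $Q$, and this bound coincides with $\sigma(3,k)+\sigma(2,k-1)$ from Theorem~\ref{t-cyc}.

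\textbf{Step 1 (count the non-equilibrium vertices).} Let $n=2k$. The vertices of $P$ are the Gale evenness strings of length $6+2k$ with six 1's. By the argument before Theorem~\ref{t-cyc}, a string is an equilibrium vertex exactly when some even cyclic shift makes it swappable, i.e.\ it decomposes into aligned blocks $11$, $00$, $0110$. Otherwise some run of 1's starts at an odd aligned position. Cyclically, a string is then of one of two kinds.

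Type (i): it contains an aligned block $01\,11\,10$ together with one further pair of 1's, which forms an aligned $11$ or $0110$ elsewhere.

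Type (ii): it contains an aligned block $01\,11\,11\,10$.

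I would verify that $g(6,n)-\sigma(3,k)-\sigma(2,k-1)$ equals the number of type~(i) strings plus the number of type~(ii) strings. This is analogous to the count $k+2$ obtained for $d=4$ after (\ref{4n}).

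\textbf{Step 2 (obstructions).} For a type~(i) string with the extra pair fixed in the remainder~$x$, take the triple
\[
01\,11\,10\;x,\qquad 11\,01\,10\;x,\qquad 11\,11\,00\;x .
\]
Its members pairwise share five 1's, so, as in (\ref{ge4}), they form a triangle of $P$. Only the first member is non-swappable, provided the pair in $x$ is aligned; this gives a loss of~$1$. For a type~(ii) string, aligned as $01111110\,0^{2k-2}$ up to even cyclic shift, take the four strings
\[
01\,11\,11\,10,\qquad 11\,01\,11\,10,\qquad 11\,11\,01\,10,\qquad 11\,11\,11\,00,
\]
each followed by $0^{2k-2}$. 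All four contain the 1's in positions $2,4,6$, so they are the vertices of a $3$-face. Any two of them share five 1's, so this face is a tetrahedron. Exactly two of the four strings, the first and the second, are non-swappable. The second is of type~(i), so it is absorbed here and must not receive its own triangle. By Lemma~\ref{l-simplex} this tetrahedron loses at least~$2$.

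\textbf{Step 3 (disjointness, the main obstacle).} The delicate point is that the chosen triangles and tetrahedra must be pairwise vertex-disjoint. A swappable string such as $11\,11\,00\,x$ might otherwise be claimed by two triangles, for instance by shifting the roles of the two $11$-blocks. Likewise, the type~(i) strings that lie inside a tetrahedron must be excluded from the triangle family. I would fix a canonical choice: anchor each obstruction at the \emph{first} (in cyclic order from a fixed reference point) odd-aligned run of 1's. Then I would show that each swappable partner string determines its anchor uniquely, by reading off the position of the block that was ``repaired''.

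If this canonical assignment produces collisions, the fallback is to use the tetrahedron bound more aggressively. In that case I would group colliding triangles into $3$-faces, where a loss of~$2$ still suffices, and recheck the count from Step~1.
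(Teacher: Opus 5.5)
Your obstruction faces are exactly the paper's. They are the tetrahedra (\ref{ge6}) for the type~(ii) strings and the triangles (\ref{gej}) for the remaining type~(i) strings. The type~(i) string you absorb into each tetrahedron, $11\,01\,11\,10\,0^{2k-2}$, is the paper's excluded case $j=2k-2$ (extra $11$ in the last pair of~$x$); its triangle is a face of that tetrahedron.

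Step~1 is correct, and more direct than the paper's polynomial identities. One correction: a string fails exactly when it has a run of at least four 1's starting at a misaligned position. Misaligned runs of length two are just $0110$ blocks, so ``some run of 1's starts at an odd aligned position'' is too broad as written. The count is $k+3$ type~(ii) strings plus $(k+3)(2k-1)$ type~(i) strings, since the extra pair has $k$ aligned and $k-1$ misaligned places among the remaining $k$ pairs. This gives $2k(k+3)$, as in (\ref{ub6}). Your proviso ``provided the pair in $x$ is aligned'' is unnecessary: with a misaligned extra pair, $11\,01\,10\,x$ and $11\,11\,00\,x$ are still equilibrium vertices.

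The genuine gap is Step~3. Vertex-disjointness is what lets you add up the forced losses, and you leave it open with a hedge and a fallback. Your anchoring device does not address it, because no choice is involved: a type~(i) string has exactly one misaligned 4-run, so its triangle is forced. What must be shown is that the two swappable vertices of a triangle determine that triangle and lie on no tetrahedron. This follows from run structure, reading positions cyclically and pairs as aligned:
\begin{itemize}
\item In $11\,11\,00\,x$ the unique 4-run is aligned. It fixes the block position and hence~$x$, and no 6-run arises because the absorbed case is excluded.
\item In $11\,01\,10\,x$ all three runs have length two, and the pair pattern $11\,01\,10$ occurs only at the block. Every pair of $x$ is $00$, $01$ or $10$, except an aligned extra pair, which is followed by $00$ because it is not the last pair of~$x$.
\item The non-swappable vertices are distinct failing strings by construction.
\end{itemize}
For the tetrahedra, all four vertices have a run of length at least four, so none is of the form $11\,01\,10\,x$. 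Moreover, $01\,11\,11\,10$ and $11\,11\,11\,00$ contain a 6-run, and $11\,01\,11\,10$ is the absorbed string. The remaining vertex $11\,11\,01\,10$ has its aligned 4-run followed by $01$ rather than $00$, so it is not of the form $11\,11\,00\,x$. Distinct tetrahedra differ in the positions of these runs. With this check added, your argument is complete and coincides with the paper's proof.
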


\proof
Let $n=2k$.
By assumption, $P=C_6(6+n)^\Delta$.
By (\ref{GE}), $P$ has $\binom{3+2k}{3}+ \binom{2+2k}{2}$
many vertices, represented by the Gale evenness strings.
With the labels $\pi(i)$ of $Q$ as in (\ref{pi}), the
number of equilibrium vertices of~$P$ (given by the swappable
bitstrings) is $\sigma(3,k)+\sigma(2,k-1)$ according to
Theorem~\ref{t-cyc}.
Because, using (\ref{sigma}), 
\[
\arraycolsep.1em
\begin{array}{rclrcl}
\binom{3+2k}{3}+\binom{2+2k}{2}
&=&
(4k^3+18k^2+20k+6)/3\,,
\\[1ex]
\sigma(3,k)&=&(4k^3+6k^2+8k+3)/3,
\qquad
\sigma(2,k-1)=2k^2-2k+1\,,
\end{array}
\]
the number of
non-equilibrium vertices is
\begin{equation}
\label{ub6}
\binom{3+2k}{3}+
\binom{2+2k}{2}
-\sigma(3,k)-\sigma(2,k-1)=2k(k+3)\,.
\end{equation}
Using the simplex bound of Lemma~\ref{l-simplex},
we now show that there are exactly $2k(k+3)$ obstructions
(non-equilibrium vertices) in~$P$, which shows that the labeling of
$Q$ via $\pi$ is optimal.

First, we consider tetrahedra (simplices with four vertices) that arise from 
the non-swappable Gale evenness strings with a substring
01111110 starting on an \textit{even position} (that is,
preceded by an even number of bits), for example at the
beginning of the string.
The respective tetrahedron is given by the four vertices
\begin{equation}
\label{ge6}
\begin{array}{l}
0\underline 1\,1\underline 1\,1\underline 1\,10~0^{2k-2}\\
1\underline 1\,0\underline 1\,1\underline 1\,10~0^{2k-2}\\
1\underline 1\,1\underline 1\,0\underline 1\,10~0^{2k-2}\\
1\underline 1\,1\underline 1\,1\underline 1\,00~0^{2k-2}\\
\end{array}
\end{equation}
where the last two bitstrings are swappable but the first
two are not.
The underlined bits 1 in second, fourth, and sixth position
indicate the common facets of these vertices.
The intersection of these three facets is the tetrahedron, a
face of~$P$.
By Lemma~\ref{l-simplex}, at least two vertices of this
tetrahedron cannot be equilibrium vertices, no matter how
the polytopes are labeled.

Any simultaneous cyclic shift of the vertices in (\ref{ge6})
by an even amount creates a new tetrahedron.
There are $k+3$ such even cyclic shifts
(half of the length $2k+6$ of the entire bitstring). 
All vertices of the shifted tetrahedron are also new:
This clearly holds separately for each row of (\ref{ge6}),
and the first and last rows always differ because their
substring 111111 starts on an odd and even position,
respectively.
Hence, these tetrahedra create $2(k+3)$ obstructions.

In addition, we consider triangles that arise from
substrings 011110 starting on an even position, followed by a
substring 11 arbitrarily thereafter (using a circular
wraparound of the bitstring) but with at least one 0 before
meeting again 011110.
These triangles, when started at the beginning of the
string, are described by
\begin{equation}
\label{gej}
\begin{array}{l}
01\,11\,10~0^j11\,0^{2k-3-j}\underline0\\
11\,01\,10~0^j11\,0^{2k-3-j}\underline0\\
11\,11\,00~0^j11\,0^{2k-3-j}\underline0\\
\end{array}
\qquad\hbox{for $0\le j\le 2k-3$\,.}
\end{equation}
There are $2k-2$ possible choices for~$j$, where each of
them defines a different triangle.
These triangles are mutually disjoint:
The first and last row in (\ref{gej}) differ because their
substring 1111 starts on an odd and even position,
respectively; this holds also after any even cyclic shift.
They also differ from the second row in (\ref{gej}), which
has no substring 1111.
The second row, after any even cyclic shift, also contains
for $j=0$ and for $j=2k-3$ the substring 11011011,
but the former on an even position and the latter on an odd
position, so they are also different.

Furthermore, the tetrahedra in (\ref{ge6}) and the triangles
in (\ref{gej}) have no common vertices.
For those that have the substring 1111 on an odd position,
it is not possible to create the second row in
the tetrahedron (\ref{ge6}) via a cyclic shift of the first
row of (\ref{gej}) for $j=2k-3$ because the underlined 0
in (\ref{gej}) differs from the first underlined bit~1 in
(\ref{ge6}).
Similarly, for the vertices that have the substring 1111 on
an even position, the third rows of (\ref{ge6}) and
(\ref{gej}) always differ for any even cyclic shift.

Consequently, all the vertices in (\ref{ge6}) and
(\ref{gej}) for any even cyclic shift are different.
By Lemma~\ref{l-simplex}, each triangle in (\ref{gej}) has
at least one non-equilibrium vertex.
These are $2k-2$ triangles for the choices of~$j$,
multiplied by $k+3$ even cyclic shifts, which
give $(2k-2)(k+3)$ obstructions.
Together with the $2(k+3)$ obstructions from the tetrahedra
(\ref{ge6}), these are $2k(k+3)$ obstructions in total,
exactly the difference in (\ref{ub6}), which proves the
claimed optimality.
\endproof

Theorem~\ref{6opt} can be applied to $n=6$, where it shows
that the swap permutation $\pi$ in (\ref{pi}) is an optimal
labeling of~$Q$.
It gives rise to 76 equilibria in a $6\times 6$ game; each
polytope has 112 vertices.
The permutation $\pi$ was stated in \cite{vS1999}.
Theorem~\ref{6opt} shows that it is optimal.
This can also be verified by testing all 12! labelings
of~$Q$, but is here proved to be optimal for any even~$n$.

\subsection{Asymptotically almost all vertices are equilibria}
\label{s-drows}

In this section we consider $d\times n$ games where $P$ and
$Q$ in (\ref{PQ}) are dual cyclic polytopes,
$P=C_d(d+n)^\Delta$ and $Q=C_n(d+n)^\Delta$
(after a suitable affine transformation, which does not
affect their combinatorial structure).
We assume $d\le n$, so that $P$ is the lower-dimensional
polytope with the smaller number of vertices $g(d,n)$, which
is therefore an upper bound on the number of equilibria.
The vertices of $P$ and $Q$ are encoded by the Gale evenness
strings, where the 1's indicate the facets that the vertex
lies on.
The facets of $P$ are labeled in increasing order of the
positions in the bitstring, and the labels of~$Q$ are
permuted by the swap permutation $\pi$ in~(\ref{pi}); if
$d+n$ is odd, then the last label $d+n$ is kept fixed as
in~(\ref{swap9}).
An equilibrium of $P\times Q$ corresponds to a pair of
bitstrings that both fulfill Gale evenness and that are
complementary when this labeling of~$Q$ is taken into
account.

We write $d=2\ell$ or $d=2\ell+1$, and $n=2k$ or $n=2k+1$,
so that $\ell=\lfloor d/2\rfloor$ and $k=\lfloor n/2\rfloor$.
For fixed~$d$ the number of vertices $g(d,n)$ grows in $n$ like
$n^{\ell}/\ell!$ for even~$d$, and like $2\,n^{\ell}/\ell!$
for odd~$d$.
In both cases it is of order $\Theta(n^{\lfloor
d/2\rfloor})$, which is the asymptotic version of the Upper
Bound Theorem~\ref{UBT} \citep{Mul1994}.

That the labeling of~$Q$ matters is demonstrated by the
following lemma, which considers the identity permutation.

\begin{lemma}
If $P$ and $Q$ are identically labeled, where $d=2\ell$ and
$n=2k$, then the fraction of equilibrium vertices among all
vertices of $P$ tends to $1/2^\ell$ as $k$ grows.
\end{lemma}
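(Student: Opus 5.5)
The plan is to reduce the statement to counting bitstrings. We translate the equilibrium condition into parity conditions on the runs of 0's of a Gale evenness string $b$ for $P=C_{2\ell}(2\ell+2k)^\Delta$, and then compare the number of admissible strings with $g(2\ell,2k)$ from (\ref{GE}) as $k\to\infty$.

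\textbf{Reformulation.} Since $Q$ carries the identity labeling, a vertex $b$ of $P$ is an equilibrium vertex exactly when its complement $\bar b$ is a Gale evenness string of length $2\ell+2k$ with $2k$ ones. A run of $1$'s in $\bar b$ is a run of $0$'s in $b$, so the condition on $\bar b$ is a parity condition on the runs of $0$'s of $b$.

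\textbf{Dominant strings.} Write $b$ in its dominant form. By (\ref{GE}), all but $\binom{\ell-1+2k}{\ell-1}=O(k^{\ell-1})$ of the $g(2\ell,2k)\sim (2k)^\ell/\ell!$ vertices consist of $\ell$ blocks $11$ placed among $2k$ zeros. Among these, the strings in which two blocks are adjacent also number only $O(k^{\ell-1})$. So, up to a negligible fraction, $b$ is described by its gap vector $(g_0,\ldots,g_\ell)$ of zero runs, with all $g_i\ge1$ for $1\le i\le \ell-1$ and $\sum_i g_i=2k$. These compositions number $\sim (2k)^\ell/\ell!$.

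\textbf{Parity conditions.} Gale evenness of $\bar b$ forces the interior gaps $g_1,\ldots,g_{\ell-1}$ to be even, since these are exactly the runs $1^{g_i}$ of $\bar b$ enclosed by $0$'s. To reach the stated limit $1/2^\ell$, the end runs $g_0$ and $g_\ell$ of $\bar b$ must also be controlled. Their sum is automatically even, because the total is even. The key step is therefore to show that, for dominant strings, the case "$g_0$ and $g_\ell$ both odd" contributes no equilibrium vertex. Granting this, the admissible gap vectors are exactly those with all $\ell+1$ gaps even. Halving each gap, these are compositions of $k$ into $\ell+1$ parts, numbering $\sim k^\ell/\ell!$. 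Dividing by $(2k)^\ell/\ell!$ gives the limit $1/2^\ell$. The remaining cases (strings with a $1$ at both ends, or with adjacent blocks) are $O(k^{\ell-1})$ and vanish in the limit.

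\textbf{Main obstacle.} The step I expect to be hardest is exactly this end-run claim. A complement with odd end runs of $1$'s has the shape of the second family in (\ref{GE}), a $1$ at each end, which Gale evenness alone does not exclude. So I would first test the claim on $\ell=1$, where $P$ is a polygon and all vertices can be listed explicitly. I would then analyze carefully, using the cyclic symmetry of Gale evenness strings of even length, whether the odd–odd end case must be discarded. If this check shows that those strings do give equilibria, the count of parity constraints has to be redone, and the limiting fraction follows from the same composition count with the corrected number of constrained gaps.
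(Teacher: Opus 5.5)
\textbf{The gap is the step you flagged, and it cannot be closed: the claim is false.} Your reduction to gap vectors $(g_0,\ldots,g_\ell)$, and your observation that the interior gaps $g_1,\ldots,g_{\ell-1}$ must be even, are fine. The trouble is the assertion that the case ``$g_0$ and $g_\ell$ both odd'' contributes no equilibrium vertex. Suppose the interior gaps are even and $g_0,g_\ell$ are odd. Then $\bar b=1^{g_0}\,00\,1^{g_1}\,00\cdots00\,1^{g_\ell}$ has its only odd $1$-runs at the two ends of the string, and Gale evenness does not constrain end runs. So $\bar b$ is a vertex of $Q$, and $b$ is an equilibrium vertex.

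These strings are exactly $0\,Y\,0$ with $Y$ a concatenation of $\ell$ blocks $11$ and $k-1$ blocks $00$. There are $\binom{\ell+k-1}{\ell}\sim k^\ell/\ell!$ of them, the same leading order as the all-even strings, so they cannot be discarded. Your own $\ell=1$ test confirms this. There $P$ is a $(2k+2)$-gon, every vertex $0^j110^{2k-j}$ has complement $1^j001^{2k-j}$ (both $1$-runs are end runs), and $10^{2k}1$ has complement $01^{2k}0$. So all vertices are equilibrium vertices and the fraction is $1$, not $1/2$.

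\textbf{Consequence: the stated limit is wrong.} Counting correctly, every interior run of $b$ must be even and the end runs are free. This gives three families: strings built from blocks $11$ and $00$, strings $1X1$, and strings $0Y0$. The total is $\binom{\ell+k}{\ell}+\binom{\ell+k-1}{\ell-1}+\binom{\ell+k-1}{\ell}=2\binom{\ell+k}{\ell}$ equilibrium vertices, so the fraction tends to $2/2^\ell=1/2^{\ell-1}$. The lemma as written therefore cannot be proved.

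The paper's proof has the same defect. It lists the block strings and the strings with a single $1$ at each end, but omits the strings with a single $0$ at each end, which is precisely your odd--odd case. Its total $\binom{\ell+k}{\ell}+\binom{\ell+k-1}{\ell-1}$ equals $9$ for $\ell=k=2$. That is odd, contradicting the parity in Proposition~\ref{p-index}(a). The correct value, $12$ of $20$ vertices for $d=n=4$, is the number the paper itself quotes later for the identity permutation.

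So the repair is the one you anticipated in your last sentence: only the $\ell-1$ interior gaps are constrained, and the limit is $1/2^{\ell-1}$.
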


\proof
Because $P$ and $Q$ are identically labeled, an equilibrium
vertex of $P$ is a Gale evenness string if and only if both its
1's and 0's come in pairs.
This means that the string is composed of
$\ell$ substrings 11 and $k$ substrings 00, with $\binom{\ell+k}{\ell}$
possibilities,
or it has a single 1 at each end with 
$\ell-1$ substrings 11 and $k$ substrings 00 in between,
with $\binom{\ell-1+k}{\ell-1}$ possibilities.
The total number of these strings is therefore
\begin{equation}
\label{both}
\binom{\ell+k}{\ell} +\binom{\ell-1+k}{\ell-1}
=
\binom{\ell+k}{\ell} \left(1+\frac\ell{\ell+k}\right)
\,.
\end{equation}
The fraction $F$ of these strings compared to all $g(2\ell,2k)$
Gale evenness strings is given by
\begin{equation}
\label{Rn}
F = 
\frac
{\binom{\ell+k}{\ell} \left(1+\frac\ell{\ell+k}\right)}
{\binom{\ell+2k}{\ell} \left(1+\frac\ell{\ell+2k}\right)}
= 
\frac
{\left(1+\frac\ell{\ell+k}\right)}
{\left(1+\frac\ell{\ell+2k}\right)}
\prod_{i=1}^{\ell}\frac{k+i}{2k+i}
= 
\frac
{\left(1+\frac\ell{\ell+k}\right)}
{\left(1+\frac\ell{\ell+2k}\right)}
\prod_{i=1}^{\ell}\left(1-\frac{1}{2+i/k}\right)\,.
\end{equation}
Because $1-\frac{1}{2+i/k}$ is increasing in~$i$, we have
\begin{equation}
\label{bound}
\frac1{2^\ell}=\left(1-\frac{1}{2}\right)^\ell
<
\left(1-\frac{1}{2+1/k}\right)^\ell
<
F
\le
\frac
{\left(1+\frac\ell{\ell+k}\right)}
{\left(1+\frac\ell{\ell+2k}\right)}
\left(1-\frac{1}{2+\ell/k}\right)^\ell
\end{equation}
where the right term clearly tends to $1/2^\ell$ for large~$k$.
\endproof

We now use the swap permutation $\pi$ for the labels of~$Q$.
The following theorem generalizes Theorem~\ref{t-cyc}, which
gave the equilibrium number for even $d$ and~$n$, to all
combinations of parities of $d$ and~$n$, and shows in each
case that these equilibria are approximately \emph{all} of
the vertices of~$P$, up to a vanishing fraction of order
$O(1/n)$.

\begin{theorem}
\label{t-limit}
Let $P=C_d(d+n)^\Delta$ and $Q=C_n(d+n)^\Delta$ with $Q$
labeled by the swap permutation~$\pi$, and let $d\le n$.
Then the number of equilibrium vertices of~$P$ is
\begin{equation}
\label{allcounts}
\begin{array}{ll}
\sigma(\ell,k)+\sigma(\ell-1,k-1) &
\quad\hbox{if $d=2\ell$ and $n=2k$,}\\[.5ex]
\sigma(\ell,k)+\sigma(\ell-1,k)   &
\quad\hbox{if $d=2\ell$ and $n=2k+1$,}\\[.5ex]
\sigma(\ell,k)+\sigma(\ell,k-1)   &
\quad\hbox{if $d=2\ell+1$ and $n=2k$,}\\[.5ex]
2\,\sigma(\ell,k)                 &
\quad\hbox{if $d=2\ell+1$ and $n=2k+1$.}
\end{array}
\end{equation}
In each case, the fraction of equilibrium vertices among all
vertices of~$P$ tends to~1 as $n$ grows, and the fraction of
non-equilibrium vertices is of order $O(1/n)$.
\end{theorem}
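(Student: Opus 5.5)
The plan has two parts: an exact count of equilibrium vertices in each of the four parity cases, and then an asymptotic comparison with $g(d,n)$ from (\ref{GE}) and (\ref{GEodd}).

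\textbf{Counting.} The count is a characterization of the Gale evenness strings $b$ of length $d+n$, with $d$ ones, whose complement $\bar b$ becomes a Gale evenness string after relabeling by $\pi$. If $d+n$ is odd, $\pi$ swaps the pairs of positions $(1,2),(3,4),\ldots$ and fixes the last position. I first cut $b$ into consecutive blocks of length 2, plus a final single bit if $d+n$ is odd. A block $00$ or $11$ complements to $11$ or $00$ and is fixed by the swap. A block $01$ complements to $10$ and the swap turns it back into $01$, and likewise $10$ returns to $10$. So after complementing and swapping, every mixed block keeps its pattern, while the pure blocks $00$ and $11$ exchange roles.

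Next I run the Gale evenness argument as in the derivation of Theorem~\ref{t-cyc}. Interior runs of 1's must have even length both in $b$ and in the transformed complement. This forces every interior mixed block $01$ to be followed, after some $11$'s, by a block $10$. The substring is therefore $01(11)^j10$, and complementing it leaves an interior run $1(00)^j1$ of 0's in $b$. That run is odd unless $j=0$, which gives the block $0110$. Mixed blocks can survive only at the ends of the string, where runs are not bounded by 0's on both sides. The boundary analysis then gives the four cases:
\begin{itemize}
\item For $d,n$ even, either the string is $(\ell,k)$-swappable, or it is $10\,s\,01$ with $s$ $(\ell-1,k-1)$-swappable. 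This recovers Theorem~\ref{t-cyc}.
\item For $d$ even and $n$ odd, the final fixed bit must be $0$ after a swappable prefix, giving $\sigma(\ell,k)$. Otherwise the string is $10\,s\,1$ with $s$ $(\ell-1,k)$-swappable, giving $\sigma(\ell-1,k)$. This matches the pattern in (\ref{ge49}).
\item For $d$ odd and $n$ even, there is exactly one lone 1 at an end. If it sits at the start, as $10\,s$, then $s$ is $(\ell,k-1)$-swappable. If it sits at the end, as $s\,01$, then $s$ is $(\ell,k)$-swappable up to the symmetric bookkeeping, which I need to verify carefully.
\item For $d,n$ both odd, the last bit is fixed and the two options are a swappable string followed by $1$, or $10$ followed by a swappable string followed by $0$. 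Each contributes $\sigma(\ell,k)$.
\end{itemize}

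\textbf{Main obstacle.} The hardest step is this endpoint case analysis. I must make sure that no other boundary configurations, such as a mixed block at the end combined with a lone fixed bit, survive both Gale evenness conditions. I also have to handle correctly that a run touching the string boundary is exempt from the evenness requirement. I would settle it by listing the possible first block and the possible last block or bit, and checking each combination against both strings.

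\textbf{Asymptotics.} Fix $d$. In (\ref{sigma}), the $i=0$ term $\binom{k+\ell}{\ell}$ is of order $k^\ell/\ell!$, and the terms with $i\ge1$ are $O(k^{\ell-1})$. Hence $\sigma(\ell,k)=\binom{k+\ell}{\ell}+O(k^{\ell-1})$, while $\sigma(\ell-1,\cdot)=O(k^{\ell-1})$.

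This seems to give a fraction near $\binom{k+\ell}{\ell}/\binom{n+\ell}{\ell}\approx 2^{-\ell}$, which contradicts the claim. The resolution is that $\sigma$ counts with multinomial weights: $\sigma(\ell,k)=\sum_i \frac{(k+\ell-i)!}{i!(\ell-i)!(k-i)!}$. The $i=\ell$ term is $\binom{k}{\ell}$, and the full sum is of order $k^\ell\cdot\sum_i 1/(i!(\ell-i)!)=(2k)^\ell/\ell!$, which is the leading term of $\binom{n+\ell}{\ell}$.

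So the real task is to compare leading and next-order terms. The sum $\sum_{i}\binom{\ell}{i}\binom{k+\ell-i}{\ell}$ is a polynomial in $k$ with leading coefficient $2^\ell/\ell!$. I compare it with $g(d,n)$, which is $\binom{\ell+n}{\ell}+O(n^{\ell-1})$, or twice that for odd $d$, matching the count $2\sigma(\ell,k)$. The leading coefficients agree, so the difference is $O(n^{\ell-1})$, and the fraction of non-equilibrium vertices is $O(1/n)$.
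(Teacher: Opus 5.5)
Your block decomposition is the right tool. Complementing and swapping fixes the mixed blocks $01$ and $10$ and exchanges $00$ with $11$, and your even--even and even--odd characterizations are correct. Your final asymptotic argument is also correct and equally valid. It compares the leading coefficient $2^\ell/\ell!$ of $\sigma(\ell,k)$, as a polynomial in $k$, with that of $g(d,n)$. The paper instead uses the cruder bound $\sigma(\ell,k)\ge 2^\ell\binom{k}{\ell}$ and a product estimate; your computation is essentially the one the paper does afterwards for the exact deficit.

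The gap is in the two cases with odd $d$, where you have put the fixed position in the wrong case. The swap permutation fixes the last position exactly when $d+n$ is odd, that is, for $d=2\ell+1$, $n=2k$. It does \emph{not} fix it when $d$ and $n$ are both odd. As a result, neither of your odd-$d$ characterizations has the right number of 0's:
\begin{itemize}
\item For $d=2\ell+1$, $n=2k$: the string $10\,s$ with $s$ $(\ell,k-1)$-swappable has $2k-1$ zeros, and $s\,01$ with $s$ $(\ell,k)$-swappable has $2k+1$, not $2k$.
\item For $d=2\ell+1$, $n=2k+1$: a swappable string followed by $1$, or $10\,s\,0$, always has an even number of 0's, while $n$ is odd.
\end{itemize}
So the families you count are wrongly sized or empty. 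The totals $\sigma(\ell,k)+\sigma(\ell,k-1)$ and $2\sigma(\ell,k)$ agree with the theorem only by coincidence.

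The endpoint analysis needs three facts.
\begin{itemize}
\item A block $10$ must be the first block or immediately follow a block $01$. Otherwise $b$ or its transformed complement has an isolated interior 1; for example, a prefix $(11)^j10$ with $j\ge1$ becomes $(00)^j10$.
\item A block $01$ must immediately precede a block $10$ or be the very last block.
\item If the last position is fixed, a block $01$ cannot stand directly before it. Indeed $01\,0$ violates Gale evenness in $b$, and $01\,1$ becomes $01\,0$ in the transformed complement.
\end{itemize}
The parity of the number of 1's then decides everything. For $d=2\ell+1$, $n=2k$, the equilibrium strings are exactly $s\,1$ with $s$ $(\ell,k)$-swappable and $10\,s\,0$ with $s$ $(\ell,k-1)$-swappable. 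For $d=2\ell+1$, $n=2k+1$, with no fixed position, they are exactly $10\,s$ and $s\,01$ with $s$ $(\ell,k)$-swappable. You have essentially interchanged these two descriptions.

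The paper avoids a direct odd--even analysis. It exchanges $P$ and $Q$ and relabels both by the involution $\pi$. This reduces the case to the even--odd count with $d$ and $n$ interchanged, together with the symmetry $\sigma(\ell,k)=\sigma(k,\ell)$.

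Finally, in the asymptotics for odd $d$ and even $n$, say explicitly that the second term $\sigma(\ell,k-1)$ is not of lower order. It supplies the factor $2$ in $g(2\ell+1,n)=2\binom{\ell+n}{\ell}$. This works because it has the same leading coefficient $2^\ell/\ell!$ in $k$ as $\sigma(\ell,k)$. Your closing phrase ``matching the count $2\sigma(\ell,k)$'' literally covers only the odd--odd case.
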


\proof
We first describe the equilibrium vertices in each case.

The case $d=2\ell$ and $n=2k$ has been stated in
Theorem~\ref{t-cyc} .
Recall that under the labeling $\pi$ every
$(\ell,k)$-swappable string, composed of substrings 0110,
11, and 00, is an equilibrium vertex, because its complement
is composed of 1001, 11, and 00, which under $\pi$ becomes
$(k,\ell)$-swappable and hence fulfills Gale evenness.
There are $\sigma(\ell,k)$ such strings, see~(\ref{sigma}).
In addition the $(\ell-1,k-1)$-swappable strings preceded by
10 and succeeded by 01 are equilibrium vertices, which gives
the term $\sigma(\ell-1,k-1)$.

If $d=2\ell$ and $n=2k+1$, then $d+n$ is odd and the last
label of~$Q$ is kept fixed.
A Gale evenness string with $2\ell$ 1's and $2k+1$
0's is an equilibrium vertex if and only if it is
$(\ell,k)$-swappable followed by a terminal~0, or an initial
substring 10 followed by an $(\ell-1,k)$-swappable string
followed by a terminal~1, as already observed in the special
case $d=4$, $n=5$ before~(\ref{ge49}).
The two families have $\sigma(\ell,k)$ and $\sigma(\ell-1,k)$
members, respectively.
Note that we do not need that $d\le n$.

If $d=2\ell+1$ and $n=2k$, then this game is obtained from
the even--odd case by exchanging $d$ and~$n$.
We exchange the two polytopes $P$ and~$Q$, and apply the
swap permutation $\pi$ to the labels of both polytopes;
because $\pi$ is its own inverse, the first polytope
(formerly~$Q$) is then as assumed labeled with the identity
permutation and the second polytope is labeled with~$\pi$.
Applying the even--odd count $\sigma(\ell,k)+\sigma(\ell-1,k)$
with the roles of $d$ and $n$ interchanged gives the
equilibrium number $\sigma(\ell,k)+\sigma(\ell,k-1)$.

If both $d$ and $n$ are odd, $d=2\ell+1$ and $n=2k+1$, then
$d+n$ is even and the swap permutation $\pi$ of~(\ref{pi})
applies without a fixed last label.
Here every Gale evenness string of~$P$ has an isolated~1 at
one of its two ends, because it must begin or end with an
odd number of 1's, and a run of three or more 1's would
violate complementarity under~$\pi$.
The equilibrium vertices are therefore the $(\ell,k)$-swappable
strings preceded by 10 or succeeded by~01, which gives
$2\,\sigma(\ell,k)$ strings.

For the asymptotics it suffices to keep only the leading
term $\sigma(\ell,k)$; the second term in~(\ref{allcounts})
is nonnegative and of strictly lower order, and dropping it
only weakens the bound.
We bound $\sigma(\ell,k)$ from below by replacing each
binomial coefficient $\binom{k+\ell-i}{\ell}$ in~(\ref{sigma})
by its smallest value $\binom{k}{\ell}$, attained at $i=\ell$:
\begin{equation}
\label{multi}
\sigma(\ell,k)
=\sum_{i=0}^\ell\binom \ell i \binom{k+\ell-i}{\ell}
\ge\binom{k}{\ell}\sum_{i=0}^\ell\binom \ell i
=\binom k \ell\,2^\ell\,.
\end{equation}
Let $E$ denote the number of equilibrium vertices
in~(\ref{allcounts}), which has leading term
$\sigma(\ell,k)$ in all cases.
For even~$d$ we have $g(d,n)=\binom{\ell+n}{\ell}
(1+\frac{\ell}{\ell+n})$ by~(\ref{GE}), and for odd~$d$ we have
$g(d,n)=2\binom{\ell+n}{\ell}$ by~(\ref{GEodd}).
If both $d$ and $n$ are odd, then $E$ and $g(d,n)$ have a
factor~2 that cancels, so that the following inequalities
apply in all cases:
\begin{equation}
\label{SG}
\arraycolsep.2em
\begin{array}{rcl}
\dfrac{E}{g(d,n)}
&\ge&
\dfrac{\sigma(\ell,k)}{\,\binom{\ell+n}{\ell}\,(1+O(1/n))\,}
\;\ge\;
\dfrac{\binom{k}{\ell}\,2^\ell}{\binom{\ell+2k}{\ell}}\,(1-O(1/n))
\\[2.5ex]
&=&
\displaystyle
\prod_{i=1}^{\ell}\frac{2(k-\ell+i)}{2k+i}\,(1-O(1/n))
=
\prod_{i=1}^{\ell}\Bigl(1-\frac{2\ell-i}{2k+i}\Bigr)\,(1-O(1/n)).
\end{array}
\end{equation}
Since $\prod_{i=1}^{\ell}\bigl(1-\frac{2\ell-i}{2k+i}\bigr)
\ge\bigl(1-\frac{2\ell-1}{2k+1}\bigr)^{\ell}=1-O(1/k)$,
it follows that $E/g(d,n)$ tends to~1, and the fraction of
non-equilibrium vertices is of order $O(1/k)=O(1/n)$.
\endproof

The bound~(\ref{SG}) only shows that the non-equilibrium
vertices are a vanishing fraction of order $O(1/n)$.
The next theorem states their number precisely, and in
particular shows that this fraction is of order exactly
$1/n$, and not smaller.

\begin{theorem}
\label{t-deficit}
Let $\ell\ge2$.
Under the assumptions of Theorem~\ref{t-limit}, the number of
non-equilibrium vertices of~$P$ equals
\begin{equation}
\label{deficits}
\frac{n^{\ell-1}}{\ell!}
\left\{
\begin{array}{lll}
\binom{\ell}{2}   & \hbox{if $d=2\ell$,}& n=2k \\[.5ex]
\binom{\ell+1}{2} & \hbox{if $d=2\ell$,}& n=2k+1 \\[.5ex]
\ell(\ell+1)      & \hbox{if $d=2\ell+1$,}& n=2k  \\[.5ex]
\ell(\ell+1) & \hbox{if $d=2\ell+1$,}& n=2k+1 \end{array}
\right\}
\;+\;O(n^{\ell-2}).
\end{equation}
Equivalently, the fraction of non-equilibrium vertices of
$P$ among
all $g(d,n)$ vertices is $\binom{\ell}{2}/n+O(1/n^2)$ in the
even--even case, and $\binom{\ell+1}{2}/n+O(1/n^2)$ in the
three other cases.
\end{theorem}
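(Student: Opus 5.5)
The plan is a direct second-order asymptotic expansion. By Theorem~\ref{t-limit} the number of equilibrium vertices $E$ is known exactly in each of the four parity cases. The number of non-equilibrium vertices is $g(d,n)-E$, with $g$ given by (\ref{GE}) or (\ref{GEodd}). Both $g$ and $E$ are polynomials in $k$ (equivalently in $n$) of degree $\ell$. Their leading coefficients in $n$ agree, as (\ref{SG}) already shows implicitly, so it suffices to compute the coefficient of $n^{\ell-1}$ in each and subtract. Everything of order $n^{\ell-2}$ goes into the error term. The hypothesis $\ell\ge2$ ensures that $n^{\ell-1}$ genuinely dominates $O(n^{\ell-2})$.

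The basic tool is the expansion
\[
\binom{m+a}{\ell}=\frac{1}{\ell!}\Bigl(m^\ell+\bigl(\ell a-\tbinom{\ell}{2}\bigr)m^{\ell-1}\Bigr)+O(m^{\ell-2})
\]
for fixed $a$ and $\ell$, together with the analogous expansion of a binomial coefficient with lower index $\ell-1$, which contributes only at order $m^{\ell-1}$. I would apply it to (\ref{sigma}) with $m=k$, $a=\ell-i$, and then sum over $i$ using $\sum_i\binom{\ell}{i}=2^\ell$ and $\sum_i i\binom{\ell}{i}=\ell 2^{\ell-1}$. This should give
\[
\sigma(\ell,k)=\frac{1}{\ell!}\Bigl(2^\ell k^\ell+\ell\,2^{\ell-1}k^{\ell-1}\Bigr)+O(k^{\ell-2}).
\]
The second summand in (\ref{allcounts}), such as $\sigma(\ell-1,k-1)$ or $\sigma(\ell,k-1)$, only needs its leading term $2^{\ell-1}k^{\ell-1}/(\ell-1)!$ or $2^\ell k^{\ell}/\ell!$. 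In the latter case I must keep the shift $k-1$ to second order.

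Next I substitute $k=n/2$, or $k=(n-1)/2$ when $n$ is odd. The shift $-1/2$ contributes $-\tfrac{\ell}{2}\cdot 2^\ell k^{\ell-1}$-type corrections at order $n^{\ell-1}$, and these must be tracked. For example, in the even--even case the computation gives $E=(n^\ell+2\ell n^{\ell-1})/\ell!+O(n^{\ell-2})$. Meanwhile
\[
g(2\ell,n)=\binom{\ell+n}{\ell}+\binom{\ell-1+n}{\ell-1}=\frac{1}{\ell!}\Bigl(n^\ell+\bigl(\tbinom{\ell+1}{2}+\ell\bigr)n^{\ell-1}\Bigr)+O(n^{\ell-2}).
\]
The difference is therefore $\binom{\ell}{2}n^{\ell-1}/\ell!$, as claimed. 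I would repeat the same bookkeeping for the three other parity cases. For odd $d$ the vertex count is $g=2\binom{\ell+n}{\ell}$, which has coefficient $2\binom{\ell+1}{2}=\ell(\ell+1)$ of $n^{\ell-1}/\ell!$. For odd $d$ and odd $n$ I would use $E=2\sigma(\ell,k)$ with $k=(n-1)/2$.

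The final "equivalently" clause then follows by dividing by $g(d,n)$. This is $n^\ell/\ell!\,(1+O(1/n))$ for even $d$ and $2n^\ell/\ell!\,(1+O(1/n))$ for odd $d$. The factor $2$ turns $\ell(\ell+1)$ into $\binom{\ell+1}{2}$.

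The main obstacle is purely the second-order bookkeeping. The parity shifts ($k=(n-1)/2$, and $k-1$ in place of $k$) each perturb the $n^{\ell-1}$ coefficient, so a sign slip there would change the stated constants. I would therefore check each case against the exact small values already in the paper: (\ref{4n}) for $\ell=2$, where the deficit is $k+2=n/2+2$ with leading coefficient $\binom22/2!\cdot n=n/2$, and (\ref{ub6}) for $\ell=3$, where the deficit is $2k(k+3)\approx n^2/2=\binom32 n^2/3!$.
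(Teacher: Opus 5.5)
Your proposal is correct and follows essentially the same route as the paper. You expand $\sigma(\ell,k)$ to second order via the falling factorial, add the second summand from Theorem~\ref{t-limit}, convert from $k$ to $n$, and subtract from $g(d,n)$; the $n^{\ell}$ terms cancel and the $n^{\ell-1}$ coefficients give the stated constants, which check out in all four parity cases. Your explicit tracking of $k=(n-1)/2$, and of $\sigma(\ell,k-1)$ as a degree-$\ell$ term equal to $\frac{2^\ell}{\ell!}\bigl(k^\ell-\frac{\ell}{2}k^{\ell-1}\bigr)+O(k^{\ell-2})$ (so ``only its leading term'' would not suffice there), is more careful than the paper's sketch, which wrongly calls $\sigma(\ell,k-1)$ a term of degree $\ell-1$.
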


\proof
Since $\ell$ is fixed, all quantities are polynomials
in~$k$ of degree at most~$\ell$, and we compare their two
leading terms; the two leading coefficients in~$k$ determine
the two leading coefficients in~$n=2k$ or $n=2k+1$, since
$n=2k+O(1)$.
Expanding the falling factorial,
\[
\arraycolsep.2em
\renewcommand{\arraystretch}{1.3}
\begin{array}{rcl}
\displaystyle
\binom{k+\ell-i}{\ell}
&=&
\displaystyle
\frac{1}{\ell!}\prod_{j=0}^{\ell-1}(k+\ell-i-j)
\;=\;
\frac{k^{\ell}}{\ell!}
+\frac{k^{\ell-1}}{\ell!}\sum_{j=0}^{\ell-1}(\ell-i-j)
+O(k^{\ell-2})
\\
&=&
\displaystyle
\frac{k^{\ell}}{\ell!}
+\frac{k^{\ell-1}}{\ell!}\Bigl(\ell(\ell-i)-\tbinom{\ell}{2}\Bigr)
+O(k^{\ell-2}).
\end{array}
\]
Summing against $\binom{\ell}{i}$ and using
$\sum_{i}\binom{\ell}{i}=2^{\ell}$ and
$\sum_{i}\binom{\ell}{i}i=\ell\,2^{\ell-1}$, the coefficient
of $k^{\ell-1}/\ell!$ in $\sigma(\ell,k)$ is
$\sum_{i}\binom{\ell}{i}\bigl(\ell(\ell-i)-\binom{\ell}{2}\bigr)
=2^{\ell}(\ell^{2}-\tfrac{\ell^{2}}{2}-\binom{\ell}{2})
=2^{\ell}\cdot\tfrac{\ell}{2}$, so that
\begin{equation}
\label{sigmaexp}
\sigma(\ell,k)
=\frac{2^{\ell}k^{\ell}}{\ell!}
+\frac{2^{\ell}k^{\ell-1}}{\ell!}\cdot\frac{\ell}{2}
+O(k^{\ell-2}).
\end{equation}
The lower-order terms $\sigma(\ell-1,k-1)$, of
$\sigma(\ell-1,k)$, or of $\sigma(\ell,k-1)$ are computed the
same way; each is of degree $\ell-1$ in~$k$ and contributes
to~(\ref{deficits}) only through its own leading term
$2^{\ell-1}k^{\ell-1}/(\ell-1)!$ in the first two cases,
respectively $2^{\ell}k^{\ell-1}/\ell!$ for
$\sigma(\ell,k-1)$.
When subtracting the equilibrium number $E$ in~(\ref{allcounts})
from the vertex number $g(d,n)$ in~(\ref{GE}) or~(\ref{GEodd}),
the $k^{\ell}$-terms cancel, and collecting the coefficients
of $k^{\ell-1}$ gives the leading terms stated
in~(\ref{deficits}).

For instance, the deficit equals $k+2$ for $\ell=2$ and
$2k(k+3)$ for $\ell=3$ in the even--even case, which matches the
exact counts in~(\ref{4n}) and~(\ref{ub6}); it equals
$3(k+1)$ for the $4\times(2k{+}1)$ game, which for $k=2$ gives
the~$9$ non-equilibrium vertices of the $4\times5$ game with
$27-9=18$ equilibria of Lemma~\ref{l-17}.
The fraction statements follow by dividing~(\ref{deficits})
by $g(d,n)$, whose leading term is $n^{\ell}/\ell!$ for
even~$d$ and $2\,n^{\ell}/\ell!$ for odd~$d$; the factor~2 in
the odd case turns the deficit coefficient $\ell(\ell+1)$ into
the fraction constant $\ell(\ell+1)/2=\binom{\ell+1}{2}$.
\endproof

For a $4\times7$ game, for example, Theorem~\ref{t-limit}
gives $\sigma(2,3)+\sigma(1,3)=25+7=32$ equilibria, and
Theorem~\ref{t-deficit} an $O(1/n)$ non-equilibrium fraction
with leading deficit $3(k+1)=3\cdot4=12$; indeed the vertex
number is $g(4,7)=44$, and $44-12=32$.

Remark: For odd $d=2\ell+1$ and odd $n=2k+1$, the
$2\sigma(\ell,k)$ equilibria of Theorem~\ref{t-limit} form
all but an $O(1/n)$ fraction of the vertices.
However, this construction is not optimal.
As already observed by \citet{vS1999},
it is improved by \emph{doubling} the equilibrium number of
the $(d-1)\times(n-1)$ game $(A,B)$ obtained for even $d-1$
and $n-1$:
adding a unit vector as an additional row and
column yields the $d\times n$ game with payoff matrices
$\binom{A~~\0}{\0\T~1}$ and $\binom{B~~\0}{\0\T~1}$,
each of whose equilibria gives rise to two equilibria of the
augmented game (one where the new strategy is unplayed, and
one where it is played with positive probability, using that
equilibrium payoffs are positive).
By Theorem~\ref{t-cyc}, the resulting game has the larger
equilibrium number
$2\bigl(\sigma(\ell,k)+\sigma(\ell-1,k-1)\bigr)$ compared
to~$2\sigma(\ell,k)$ in (\ref{allcounts}),
for example $32$ for $d=n=5$ compared to~$26$.
The non-equilibrium coefficient in (\ref{deficits}) then 
drops from $\ell(\ell+1)$ to $\ell(\ell-1)$,
and the corresponding fraction to the even--even term
$\binom{\ell}{2}/n$.
\citet{ITvS25} recently proved that $32$ is in fact the
maximum number of equilibria for a $5\times5$ game.

\subsection{Cyclic polytopes: computational evidence} 
\label{s-compute}

We continue to consider $P$ and $Q$ given as dual cyclic
polytopes in dimension $d$ and~$n$, respectively, for even
$d$ and~$n$, now for $d\ge8$.
The conjecture is that $Q$ labeled with the swap
permutation $\pi$ in (\ref{pi}) maximizes the number of
equilibria for this polytope pair.

The proofs of tight bounds in Theorems \ref{4opt} and \ref{6opt}
unfortunately do not extend to $d=8$ and beyond.
While it is possible to create obstruction simplices
similar to (\ref{ge6}) or (\ref{gej}), they are not
necessarily disjoint or not numerous enough.
For example, a non-swappable bitstring may contain
twice the substring 011110, in which case the respective
triangles similar to (\ref{ge4}) are interdependent and not
disjoint.
With growing $d$, the Gale evenness strings will contain more
and more repeated substrings like 011110 or 01111110 in
addition to the swappable substring 0110.
For that reason, the simplex bound becomes less and less
tight. 

However, we have computational evidence that the swap
permutation $\pi$ gives the optimal labeling of~$Q$.
We have checked all permutations as labelings for the case $d=n=8$. 
We have reduced the number 16! of all possible permutations 
to 15! by considering only permutations $\rho$ of $[16]$
that keep element 16 fixed, $\rho(16)=16$ (the
resulting computation took 9 days on a standard laptop).

The swap permutation $\pi$ is also optimal in all small
cases with odd~$d$ or~$n$ that we have checked exhaustively:
for even $d$ and odd~$n$ these are the $4\times5$ and
$4\times7$ games (Lemma~\ref{l-17} and the remark following
Theorem~\ref{t-limit}), and for odd~$d$ and even~$n$ the
$3\times4$, $3\times6$, and $5\times4$ games; the maximum
equilibrium numbers agree with the counts of
Theorem~\ref{t-limit}.

More generally, any permutation $\rho$ of the labels in
$[d+n]$ can be changed such that $\rho(d+n)=d+n$ 
with the same equilibrium numbers.
This uses the cyclic symmetry of the Gale evenness strings,
and that these strings can also be reversed (written
backwards).
With this normalization, $\pi$ in (\ref{pi}) appears for
$d=n=4,6,8$ as the \textit{unique} (up to symmetry)
equilibrium-maximizing permutation $\rho$ in one of the
following four equivalent forms, illustrated for $d=4$, with
12345678 mapped to one of
\begin{equation}
\label{maxfix}
7\underline21\underline43\underline65\underline8,\qquad
\underline167\underline4\underline523\underline8,\qquad
3\underline25\underline47\underline61\underline8,\qquad
56\underline3\underline412\underline7\underline8
\end{equation}
where we have underlined the fixed points.
The first of these permutations is~$\pi$, originally given by
$12345678\mapsto 21436587$ as in (\ref{swap}), followed by a
cyclic shift by~1. The second permutation in (\ref{maxfix})
is $\hat \pi: 12345678\mapsto 83254761$ with the entire
string of labels written in reverse
($\hat\pi$ is a cyclic shift of $-1$ followed by $\pi$
followed by a cyclic shift of~$1$).
The third permutation in (\ref{maxfix}) is $\hat\pi$
followed by a cyclic shift by~$-1$. 
The fourth permutation in (\ref{maxfix}) is $\pi$, reversed,
followed by a cyclic shift by~$-2$. 

Considering all permutations for small even $d$ and $n$
demonstrates a significant gap of equilibrium numbers for 
the permutation, up to symmetry, that (according to the
computational evidence for $d=n=8$) creates the
\textit{second-largest} number of equilibria.
This permutation $\mu$ of $[d+n]$ is for $d=n=4$ given by
$12345678\mapsto 4321\,6587$ and generally by
\begin{equation}
\label{mu}
\mu(i)=\begin{cases}
5-i & \hbox{if $i\in\{1,2,3,4\}$,}\\
i+1 & \hbox{if $i\in\{ 5,\ldots,d+n\}$ and $i$ is odd,}\\
i-1 & \hbox{if $i\in\{ 6,\ldots,d+n\}$ and $i$ is even.}\\
\end{cases}
\end{equation}
The permutation $\mu$ reverses the first four bits of 
a bitstring and is otherwise identical to the swap
permutation $\pi$. 
Interestingly, $\mu$ \textit{gains}, compared to $\pi$, Gale
evenness strings in $P$ that start with $011110$ followed by
a swappable bitstring of the right number of 0's and 1's.
However, it \textit{loses} the swappable bitstrings that
start with $0110$, and related strings that are equilibria
under~$\pi$. 
The equilibrium numbers for $\mu$ are 12 for $d=4$,
then 60 for $d=6$, and 308 for $d=8$,
compared to 16, 76, and 384 for $\pi$.
The following proposition states them in general.

\begin{proposition}
\label{p-mu}
Consider a $d\times n$ game with even $d=2\ell$ and $n=2k$
where $P$ and $Q$ in $(\ref{PQ})$ are dual cyclic polytopes,
and the permutation $\mu$ in $(\ref{mu})$ for the labels of~$Q$.
Then the number of equilibria is
$\sigma(\ell,k)-\sigma(\ell-2,k-2)$,
whereas for $\pi$ it is $E(d,n)= \sigma(\ell,k)+\sigma(\ell-1,k-1)$.
The difference is exactly $E(d-2,n-2)$.
\end{proposition}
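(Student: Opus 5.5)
The plan is to characterize the equilibrium vertices of $P$ under $\mu$ directly as bitstrings, in the same way as for $\pi$ before Theorem~\ref{t-cyc}, and then count them with the block recursion for $\sigma$.

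\textbf{Complementarity condition.} A vertex of $P$ is a Gale evenness string $b$ with $d$ 1's. The facet of $Q$ in position $i$ carries label $\mu(i)$, and $\mu$ is an involution. So the candidate partner vertex of $Q$ is the string $q$ with $q_i=1-b_{\mu(i)}$, and $b$ is an equilibrium vertex if and only if $q$ also fulfills Gale evenness.

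Write $b=b'b''$, where $b'$ is the first four bits. Then $q$ is the complement of $b'$ written backwards, followed by the complement of $b''$ with adjacent pairs swapped as under $\pi$. Since $d+n$ is even, Gale evenness is invariant under cyclic shifts. I will therefore treat both $b$ and $q$ cyclically, which lets me handle the junction between positions 4 and 5 and the wraparound between positions $d+n$ and 1 in the same way.

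\textbf{Case split on the prefix.} I split according to $b'$ and the parity of the 1-runs meeting the two junctions. Away from the first four positions, the argument of Theorem~\ref{t-cyc} applies unchanged. The part of $b''$ between the junctions must consist of blocks $0110$, $11$, $00$ aligned on even positions, with only a half-block possibly cut at either end. So each admissible prefix pattern, together with the allowed half-blocks at the junctions, determines a family counted by some $\sigma(\ell',k')$ with $\ell'\in\{\ell,\ell-1,\ell-2\}$ and $k'\in\{k,k-1,k-2\}$. The expected outcome is as follows.

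\begin{itemize}
\item The swappable strings whose first block is not $0110$ survive. Reversing $11$, $00$, or the pair $11\,00$ (and similar) inside the first four bits still produces Gale-even blocks in $q$.
\item The swappable strings beginning with $0110$ are lost, since reversing gives a pattern that is no longer Gale even in $q$. This removes $\sigma(\ell-1,k-1)$ strings.
\item Among the strings $10\cdots01$ that were equilibria under $\pi$, some survive and some are lost.
\item The new strings $011110\cdots$ followed by a swappable string are gained. These number $\sigma(\ell-2,k-2)$ or a similar expression.
\end{itemize}

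I will tabulate each class with its $\sigma$-count.

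\textbf{Summing the classes.} I will add the class counts using the recursion
\[
\sigma(\ell,k)=\sigma(\ell-1,k)+\sigma(\ell,k-1)+\sigma(\ell-1,k-1),
\]
which comes from classifying swappable strings by their first block $11$, $00$, or $0110$. The goal is to reduce the total to $\sigma(\ell,k)-\sigma(\ell-2,k-2)$.

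The last claim of the statement is then immediate. By Theorem~\ref{t-cyc},
\[
E(d,n)-\bigl(\sigma(\ell,k)-\sigma(\ell-2,k-2)\bigr)=\sigma(\ell-1,k-1)+\sigma(\ell-2,k-2)=E(d-2,n-2).
\]
As checks, I will verify the small values $16-4=12$ for $d=n=4$ and $76-16=60$ for $d=n=6$.

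\textbf{Main obstacle.} I expect the hardest step to be the exhaustive and non-overlapping treatment of the junction at positions 4 and 5 and of the wraparound. In particular, a 1-run of odd length can straddle position 4, and the reversal can then either repair or break Gale evenness in $q$. I would first list all 16 prefixes $b'$ and, for each, the admissible first bits of $b''$ and last bits of $b$. This is a finite check that can be confirmed against the computed values 12, 60, and 308. After that, I would collapse the surviving families into $\sigma$-terms.
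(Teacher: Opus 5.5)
Your setup is correct: $b$ is an equilibrium vertex if and only if $q_i=1-b_{\mu(i)}$ is Gale even. Splitting by the first four bits is the right strategy, and the final step from the count to $E(d-2,n-2)$ is fine. The gap is that the classification, which is the whole content of the proposition, is only announced. The outcome you anticipate is also wrong in two places, so the planned summation cannot close.

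First, not every swappable string whose first block differs from $0110$ survives. The string $b=11\,0110\,S=1101\,10\,S$, with $S$ $(\ell-2,k-1)$-swappable, gives $q=0100\,10\cdots$, which has an isolated interior~1. This is an additional loss of $\sigma(\ell-2,k-1)$ strings.

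Second, the gains are not only $0111\,10\,S$ but also the wraparound family $1110\,S\,01$. In both, $S$ is $(\ell-2,k-1)$-swappable, since the fixed bits use four 1's and two 0's. So there are $2\sigma(\ell-2,k-1)$ gains, not $\sigma(\ell-2,k-2)$; for $d=n=4$ they are $0111\,10\,00$ and $1110\,00\,01$.

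Third, of the strings $10\,X\,01$, exactly those with $X$ beginning with $00$ survive. These are the strings $1000\,S\,01$, of which there are $\sigma(\ell-1,k-2)$. The strings $1011\,S\,01$ and $1001\,10\,U\,01$ are lost. Even with this third class counted correctly, your classes give $63-13+5+3=58$ for $d=n=6$ instead of $60$.

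With the correct classes the count is $\sigma(\ell,k)-\sigma(\ell-1,k-1)-\sigma(\ell-2,k-1)+\sigma(\ell-1,k-2)+2\sigma(\ell-2,k-1)$. Your recursion applied to $\sigma(\ell-1,k-1)$ reduces this to $\sigma(\ell,k)-\sigma(\ell-2,k-2)$.

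The paper organizes the same case analysis as a comparison with the $\pi$-equilibria of Theorem~\ref{t-cyc}, which avoids the recursion. The gains $0111\,10\,S$ and $1110\,S\,01$ are matched one-to-one by the losses $1101\,10\,S$ and $1011\,S\,01$. The net loss consists of the strings $0110\,T$ with $T$ $(\ell-1,k-1)$-swappable and $1001\,10\,U\,01$ with $U$ $(\ell-2,k-2)$-swappable. This gives $\sigma(\ell-1,k-1)+\sigma(\ell-2,k-2)=E(d-2,n-2)$ directly.

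In either form you still need the exhaustiveness step you flag. Since $\mu$ and $\pi$ agree from position~5 on, you must check that all other prefixes behave the same under both; for instance, $0001(11)^t10\cdots$ with $t\ge1$ fails for both. This has to be verified, not just expected.
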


\proof
We first describe the equilibrium strings in $P$ that work
for $\mu$ but not for $\pi$.
They are exactly given by strings of the form $0111\,10\,S$
or $1110\,S\,01$ for a bitstring $S$ that is
$(\ell-2,k-1)$-swappable.
Observe that $S$ has the right number of $d-4$ bits~1 and
$n-2$ bits~0.
The first string $0111\,10\,S$ maps under $\mu$ to
$1110\,01\,\mu(S)$ which when complemented becomes
$0001\,10\,S'$ for a swappable substring $S'$
and fulfills Gale evenness.
However, under $\pi$ it maps to
$1011\,01\,\pi(S)$ that when complemented fails Gale
evenness.
The second string $1110\,S\,01$ maps under $\mu$ to
$0111\,\mu(S)\,10$ which when complemented becomes
$1000\,S'\,01$ which fulfills Gale evenness, but
similarly fails for~$\pi$.

Second, the equilibrium strings in $P$ that work for $\pi$
but fail for $\mu$ are of the following four kinds:

(a)~~~
$1101\,10\,S$ where $S$ is $(\ell-2,k-1)$-swappable.

(b)~~~
$1011\,S\,01$ where $S$ is $(\ell-2,k-1)$-swappable.

(c)~~~
$0110\,T$ where $T$ is $(\ell-1,k-1)$-swappable.

(d)~~~
$1001\,10\,U\,01$ where $U$ is $(\ell-2,k-2)$-swappable.

\noindent
All of these strings give equilibria under $\pi$.
Consider how they map under $\mu$ and then under
complementation do not fulfill Gale evenness:

(a)~~~
$1011\,10\,\mu(S)$, complements to $0100\,01\,S'$

(b)~~~
$1101\,\mu(S)\,10$, complements to $0010\,S'\,01$

(c)~~~
$0110\,\mu(T)$, complements to $1001\,T'$

(d)~~~
$1001\,01\,\mu(U)\,10$, complements to $0110\,10\,U'\,01$

\noindent
where $S',T',U'$ are swappable strings.

The above are exactly the Gale evenness strings where $\pi$
and $\mu$ differ in their equilibrium properties.
This holds because $\pi$ and $\mu$ differ only on the first
four bit positions:
Initial substrings 0100, 0010, 0101, and 1010 fail Gale
evenness.
Initial substrings 0000, 0011, 1100, and 1111 followed by a
swappable string work for both $\pi$ and $\mu$.
Strings $1000\,S\,01$ for a swappable string $S$ also work
for both $\pi$ and $\mu$, 
and the same holds for $0001\,10\,S$,
but strings $0001(11)^t10\,S$ for $t\ge1$ fail for both
$\pi$ and $\mu$.
This covers all 16 cases of initial four-bit substrings.

In terms of equilibrium counts, the
$2\cdot \sigma(\ell-2,k-1)$ gains with substrings $S$ under
$\mu$ are compensated by the losses in (a) and (b) under~$\pi$.
The number of losses in (c) and (d) where $\pi$ gives an
equilibrium but $\mu$ does not is
$\sigma(\ell-1,k-1)+\sigma(\ell-2,k-2)=E(d-2,n-2)$ as
claimed.
The number of equilibria under $\mu$ is the number
$\sigma(\ell,k)+\sigma(\ell-1,k-1)$ under $\pi$ minus
$\sigma(\ell-1,k-1)+\sigma(\ell-2,k-2)$. 
\endproof

We state the following conjecture.
(For $d=n=4$ the identity permutation gives also the
second-largest number 12 of equilibria, so $\mu$ is not
unique.)

\begin{conjecture}
\label{c-cyc}
Consider $d\times n$ games for any even $d$ and $n$ and dual
cyclic polytopes $P$ and $Q$.
Then the swap permutation $\pi$ in $(\ref{pi})$ is, up to
the cyclic and reversal symmetry of the Gale evenness
strings, the unique labeling of $Q$ with the maximum number
of equilibria.
The permutation $\mu$ in $(\ref{mu})$ is the, for $n\ge
d\ge6$ up to symmetry unique, permutation with the
second-largest number of equilibria. 
\end{conjecture}

In order to prove Conjecture~\ref{c-cyc} for $d=8$, the
clique bound used in the proofs of Theorems
\ref{4opt} and~\ref{6opt} is smaller than the observed
``gap'' $E(6,n-2)$ in equilibrium counts in
Proposition~\ref{p-mu} between the permutations $\pi$ and~$\mu$.
For $n=8$ that gap is $384-308=76$, and $\mu$ is
computationally verified as the second-largest permutation.
However, these gaps go in opposite direction: the clique
bound is 408, obtained via 252 obstructions, and thus larger by 24
than the optimum 384.
Hence, it is not clear if the two proof approaches can be
usefully combined.

An even stronger conjecture is the following.

\begin{conjecture}
\label{c-even}
Consider nondegenerate $d\times n$ games for any even $d$ and $n$.
Then the maximum number of equilibria is obtained
for dual cyclic polytopes $P$ and $Q$ with $Q$ labeled
with the swap permutation $\pi$ in $(\ref{pi})$.
\end{conjecture}

The dual cyclic polytopes maximize the vertex counts by
the Upper Bound Theorem, and give rise to descriptions with 
a high symmetry that allow the construction of~$\pi$.
Any further dual-neighborly polytopes will presumably be
less symmetric.
However, they are by far too numerous to collect any
computational evidence to support Conjecture~\ref{c-even}.

\subsection{Discussion}
\label{s-discuss}

We conclude with a health warning.
Dual cyclic polytopes play a central role in our 
constructions of bimatrix games with a large number of
equilibria.
They have also been used to construct hard instances for
equilibrium-finding algorithms
\citep{SvS2004,SvS2006,SvS2016}.

Their defining inequalities in (\ref{cyc}), which are
based on the moment curve, lead to matrix entries of large
varying sizes that nevertheless need to be very accurate
because otherwise the equilibrium numbers are not correct,
nor does a Lemke-Howson path become as long as predicted
in~\citet{SvS2006}.
An alternative is to use the trigonometric moment curve,
if ${d=2\ell}$,
that maps $t$ to $(\cos t, \sin t, \cos 2t, \sin 2t, \ldots,\cos
\ell t, \sin \ell t)$ 
\citep{caratheodory1911,gale1963}.
Its values need to be rounded to rational numbers in order
to get suitable matrix entries.
Evidence for small values of $d$ indicates that the
necessary number of \textit{digits} for these rational
numbers needs to be proportional to $d$ in order to
correctly obtain the claimed combinatorial structure of Gale
evenness strings, similarly to numbers obtained from the
standard moment curve.

This suggests that cyclic polytopes are ``numerically
brittle'' and therefore not ideally suited to provide hard
computational instances for equilibrium-finding algorithms
in practice.

\needspace{3cm}
\addcontentsline{toc}{section}{References}
\small
\bibsep.8ex plus.1ex minus.05ex
\bibliography{bib-rect.bib}
\bibliographystyle{book}

\end{document}